\newtheorem{theorem}{Theorem}[section]
\newtheorem{lemma}{Lemma}[section]
\numberwithin{equation}{section}
\numberwithin{table}{section}
\numberwithin{figure}{section}
\newcommand{\reff}[1]{{\rm (\ref{#1})}}
\newcommand{\bbR}{\mathbb{R}}            
\newcommand{\R}{\mathbb{R}}            
\def\XXint#1#2#3{{\setbox0=\hbox{$#1{#2#3}{\int}$}
\vcenter{\hbox{$#2#3$}}\kern-.51\wd0}}
\begin{document}


\title{A Modified Poisson--Nernst--Planck Model with Excluded Volume Effect: Theory and Numerical Implementation}

\author{Farjana Siddiqua\thanks{
Department of Mathematics and Statistics, Florida International University, Miami, FL, U. S. A.}
\and
Zhongming Wang\thanks{
Department of Mathematics and Statistics, Florida International University, Miami, FL, U. S. A.}
\and
Shenggao Zhou\thanks{
Department of Mathematics and Mathematical Center for Interdiscipline Research, Soochow University, 1 Shizi Street, Suzhou 215006, Jiangsu, China.
Corresponding author. E-mail: sgzhou@suda.edu.cn.
}
}
  \pagestyle{myheadings} \markboth{Modified PNP Model with Excluded Volume Effect}{F. Siddiqua, Z. Wang, and S. Zhou}
  
  \date{}
  
\maketitle

\begin{abstract}
The Poisson--Nernst--Planck (PNP) equations have been widely applied to describe ionic transport in ion channels, nanofluidic devices, and many electrochemical systems. Despite their wide applications, the PNP equations fail in predicting dynamics and equilibrium states of ionic concentrations in confined environments, due to the ignorance of the excluded volume effect. In this work, a simple but effective modified PNP (MPNP) model with the excluded volume effect is derived, based on a modification of diffusion coefficients of ions. At the steady state, a modified Poisson--Boltzmann (MPB) equation is obtained with the help of the Lambert-W special function.  The existence and uniqueness of a weak solution to the MPB equation are established. Further analysis on the limit of weak and strong electrostatic potential leads to two modified Debye screening lengths, respectively.  A numerical scheme that conserves total ionic concentration and satisfies energy dissipation is developed for the MPNP model.  Numerical analysis is performed to prove that our scheme respects ionic mass conservation and satisfies a corresponding discrete free energy dissipation law. Positivity of numerical solutions is also discussed and numerically investigated. Numerical tests are conducted to demonstrate that the scheme is of second-order accurate in spatial discretization and has expected properties. Extensive numerical simulations reveal that the excluded volume effect has pronounced impacts on the dynamics of ionic concentration and flux. In addition, the effect of volume exclusion on the timescales of charge diffusion is systematically investigated by studying the evolution of free energies and diffuse charges.
 \bigskip

\noindent
{\bf Key words.}:
Poisson--Nernst--Planck Equations; Excluded Volume Effect; Mass Conservation; Energy Dissipation; Diffusion Timescale.
\end{abstract}

{\allowdisplaybreaks
\section{Introduction}
Ionic transport has been observed in a wide variety of technological applications and biological processes, such as membrane ion channels, electrochemical energy devices, and electrokinetics in microfluidics\cite{Hille_Book2001, LiBook_ImperialCollege, Schoch_RMP08, DLi_Book04}. Based on a mean-field approximation, the Poisson--Nernst--Planck (PNP) equations can be derived to describe the dynamics of ions under an electric field. The Nernst-Planck equations model the diffusion and migration of ions in gradients of ionic concentrations and electrostatic potential.  The Poisson's equation governs electrostatic potential with the charge density stemming from transporting ions.

Despite its success in many applications, the PNP theory fails in predicting dynamics and equilibrium states of ionic concentrations. One reason behind this is its ignorance of the excluded volume effect, which is of importance in confined environments.  The crucial effect of excluded volume is able to prevent unphysical crowding of pointlike counterions at the vicinity of charged surface by forming a compactly packed layer of hydrated counterions, called Stern layer.  As such, there is a saturation concentration of counterions near a charged surface due to steric hindrance. With less adsorbed counterions, the Debye screening length therefore increases\cite{ZhouWangLi_PRE11, LiLiuXuZhou_Nonliearity13}.  In addition, the excluded volume effect has profound impacts on the dynamics of ionic transport. Analysis on current-voltage relations of an ion channel shows that the excluded volume effect reduces ionic current inside a channel\cite{BZLu_JCP14}.  Nonlinear modification of ionic mobility due to the excluded volume effect leads to saturation of current through an ionic channel on account of overcrowding of ions\cite{Burger_Nonlinearity12}.

At the steady state, the PNP equations are reduced to the classical Poisson--Boltzmann (PB) equation, if zero-flux boundary conditions are imposed. Recently, there has been  growing interests in incorporating the excluded volume effect to such mean-field models. The classical PB theory has been modified to study the excluded volume effect on the equilibrium distribution of ions in charged systems\cite{BAO_PRL97, BAO_electroacta00,ChuEtal_BiophysJ07,Li_SIMA09,Li_Nonlinearity09}.   Within the framework of the PNP theory, several versions of modified PNP theory with the excluded volume effect have been proposed to describe ionic transport. One common approach is to add an excess chemical potential to the potential of mean force. Such a correction is able to address the excluded volume effect\cite{BazantSteric_PRE07, BZLu_BiophyJ11, BZLu_JCP14, BZLu_JSP16, Burger_Nonlinearity12, GLin_Cicp14, JZWu_JPCM14, LiuTu_JDDE12, LiuTuZhang_JDDE12, LinLiuZhang_SIADS13, JiaLIuZhang_DCDSB16,  BobHyonLiu_JCP10, HyonLiuBob_CMS10, HyonLiuBob_JPCB12, LinBob_CMS14, XuShengLiu_CMS14}, dielectric effects\cite{MaXu_JCP14,LiWenZhou_CMS16, DChen_BMB16, LiuMaXu_Cicp17}, and ion-ion correlations\cite{XuMaLiu_PRE14, LiuJiXu_SIAP17}.  For instance, the excluded volume effect is included by considering the entropy of solvent molecules, giving rise to a model with nonlinearly modified mobilities\cite{BazantSteric_PRE07, BZLu_BiophyJ11, BZLu_JCP14}. A more sophisticated strategy is to incorporate the excluded volume effect by adding an excess chemical potential, which is described by the density functional theory (DFT)\cite{LiuTu_JDDE12, LiuTuZhang_JDDE12, GLin_Cicp14, JZWu_JPCM14}, or by the Lennard-Jones potential accounting for hard-sphere repulsions\cite{HyonLiuBob_CMS10, BobHyonLiu_JCP10,LinBob_CMS14}. To avoid computationally intractable integro-differential equations, local approximations of nonlocal integrals are employed to obtain local models\cite{HyonLiuBob_CMS10, HyonLiuBob_JPCB12, LinBob_CMS14, LinLiuZhang_SIADS13,JiaLIuZhang_DCDSB16,BZLu_JSP16}.

In this work, we develop a simple but effective modified PNP (MPNP) model, following the treatment of the excluded volume effect proposed for the diffusion of hard spheres\cite{Bruna_PRE12}. The excluded volume effect introduces a modification of diffusion coefficients depending linearly on ionic concentrations. At the steady state, a modified Poisson--Boltzmann (MPB)  equation is derived by using the principal branch of the Lambert function. Further analysis establishes the existence and uniqueness of a weak solution to the MPB equation. In the limit of weak electrostatic potential, the MPB equation is linearized as the Debye-H\"uckel equation with a modified Debye screening length, which is longer than the classical one due to the excluded volume effect. Such a result agrees with previous models with volume exclusion, and is later confirmed by numerical simulations.  When strong electrostatic potential is considered, the MPB equation is reduced to a linear equation with a different screening length modified by volume exclusion.   Our numerical results illustrate that the MPNP model is capable of capturing the effect of volume exclusion on equilibrium ionic distributions and the timescale of charge diffusion.

Due to nonlinear coupling of electrostatic potential and ionic concentrations, it is not trivial to solve the PNP equations analytically, even numerically. Many numerical methods have been proposed in the literature. A hybrid numerical scheme that uses adaptive grids was developed to solve the PNP equations in two dimensions \cite{Gibou_JCP14}. A second-order accurate finite difference scheme was proposed to discretize the PNP equations with three important properties, which are total ionic conservation, energy dissipation, and solution positivity\cite{XFLi_JCE14}. Recently, a delicate temporal discretization scheme was designed to preserve energy dynamics\cite{XFLi_JCE17}.  By using Slotboom variables, Liu and Wang \cite{LW14} developed a free energy satisfying finite difference scheme that respects those three properties. They also constructed a free energy satisfying discontinuous Galerkin method, in which the positivity of numerical solutions is enforced by an accuracy-preserving limiter\cite{LiuWangDG_JCP17}.  He and Pan\cite{HePan_AMC16} designed a finite-difference discretization for the 2D PNP equations, which conserves total concentration and preserves electrostatic potential energy.  A finite element discretization that can enforce positivity of numerical solutions was developed for the PNP equations, as well as the PNP equations coupling with the incompressible Navier-Stokes equations\cite{XuLiu_JCP16}.

In contrast to the classical PNP equations, not much progress has been made on the development of numerical methods that can guarantee physical properties respected by the MPNP equations with the excluded volume effect. Chaudhry \emph{et al.} \cite{Chaudhry_Cicp14} developed a stabilized, mass-conserving finite element for modified PNP equations with the excluded volume effect. Tu \emph{et al.}\cite{BenLu_CPC15}  proposed a finite element method with stabilized techniques to enhance numerical robustness in solving modified PNP equations with the excluded volume effect in 3D.  In this work, we propose a finite difference scheme for the newly derived MPNP equations with the excluded volume effect. We prove that the numerical solution has desired properties that total ionic concentrations conserve and the discrete free energy dissipates monotonically. In addition, we prove the positivity of numerical solutions for a one-dimensional case that has many realistic applications. We also discuss several issues involving proving positivity of numerical solutions in high dimensions.

The rest of the paper is organized as follows. In section \ref{s:Model}, we derive a modified PNP model with with the excluded volume effect and analyze the steady state of the model.  In section \ref{s:NumMethod}, we detail the algorithm of our numerical method for the derived MPNP model, and prove properties of our numerical method. Section \ref{s:NumSim} is devoted to showing our numerical results. Finally, we draw our conclusions in section \ref{s:Con}.

\section{Model}\label{s:Model}
We consider an ionic solution occupying a bounded domain $\Omega$ in $\bbR^d$ with a boundary $\partial \Omega$, and $d=2,3$.  We assume there are $M$ ionic species in the solution. Denote by $\psi: \Omega \to \bbR$ the electrostatic potential, and $c^l=c^l(t,\cdot)$ the local ionic concentration of the $l^{th}$ species at time $t$ and a spatial point. Note that the electrostatic potential is also a time-dependent function, since it couples with time-dependent ionic concentrations through the Poisson's equation.  Denote by $q_l$ the valence of the $l^{th}$ ionic species.

We consider the diffusion of ions with the excluded volume effect under the gradient of given electrostatic potential. It is known that diffusion coefficients of ions have sensitive dependence on sizes of ions\cite{EVE_JCIS08}.  Electrophoretic mobilities predicted by classical models that ignore the excluded volume effect are smaller than that measured by experiments\cite{Lyklema_JCIS07}.  The collective mobilities of ions are enhanced by the inclusion of excluded volume effect. We know by the Einstein relation that diffusion coefficients of ions increase as well. In this work, we assume that diffusion coefficients of ions are functions of local ionic concentrations.  We have
the following modified Nernst--Planck equations for ionic concentrations in a dimensionless form:
\begin{equation}\label{MNPEqns}
 \partial_t c^l  =\gamma\nabla \cdot \left( D^l(c^l) \nabla  c^l+  q_l  c^l \nabla \psi  \right), \quad l=1,2,\cdots, M,
\end{equation}
where $\gamma$ is a positive coefficient from nondimensionalization and $D^l(c^l)$ is the diffusion coefficient for $c^l$. The electrostatic potential $\psi$ is governed by the Poisson's equation
\begin{equation}
- \epsilon\Delta \psi  = \sum_{l=1}^Mq_lc^l,
\end{equation}
where $\epsilon$ is a positive dimensionless parameter. For different applications, different boundary conditions can be imposed for electrostatic potential. For instance, Dirichlet boundary conditions are prescribed to represent fixed electrostatic potential on the boundary, Neumann boundary conditions are imposed to describe surface charge density on the boundary, and Robin boundary conditions can be used to describe surface capacitance.

 In the literature, many modified Poisson--Nernst--Planck models with the excluded volume effect can be regarded as modification of diffusion coefficients with respect to ionic concentrations\cite{BazantSteric_PRE07, BZLu_BiophyJ11, Burger_Nonlinearity12, BZLu_JSP16}.  By the method of matched asymptotic expansions, Bruna and Chapman \cite{Bruna_PRE12} derive a linear functional dependence of diffusion coefficients on ionic concentrations, to account for the excluded volume effect.  Following this treatment of the excluded volume effect, we use
$$D^l(c^l)=1+\alpha_l c^l, $$
where $\alpha_l$ is a size-related positive parameter arising from volume exclusion interactions.  In summary, we have the modified Poisson--Nernst--Planck equations
\begin{equation}\label{SMPNPi}
\left\{
\begin{aligned}
& \partial_t c^l  =\gamma\nabla \cdot \left( \nabla  c^l+\alpha_l c^l \nabla c^l+  q_l  c^l \nabla \psi  \right), \quad l=1,2,\cdots, M, \quad  x\in  \Omega, \; t>0, \\
&- \epsilon\Delta \psi  = \sum_{l=1}^Mq_lc^l,  \quad x\in  \Omega, \; t>0.
\end{aligned}
\right.
\end{equation}


\subsection{Related Models}
We discuss several related models with the excluded volume effect. By incorporating entropies of solvent molecules, a type of MPNP models with concentration-dependent diffusion coefficients has also been developed\cite{BazantSteric_PRE07, BZLu_BiophyJ11, Burger_Nonlinearity12}. The diffusion coefficient for each ionic species is a nonlinear function of concentrations of all ionic species. Another related model has been developed by using local approximations of the Lennard-Jones potential for hard-sphere interactions\cite{HyonLiuBob_CMS10, HyonLiuBob_JPCB12, LinBob_CMS14}. The corresponding modified Nernst-Planck equation is given by
\[
 \partial_t c^l  =\gamma\nabla \cdot \left( \nabla c^l+  q_l  c^l \nabla \psi + \sum_{k=1}^M g_{lk} c^l \nabla c^k \right),
\]
where $g_{lk}$ are positive constants related to ionic sizes.  It is shown that the corresponding free energy is strictly convex if and only if the matrix ${\bf G}:= (g_{lk})$ is positive semi-definite\cite{Gavish_arXiv17}. Our model corresponds to zero off-diagonal entries of ${\bf G}$, in which case the MPNP system is asymptotically stable and does not have multiple steady states\cite{LinBob_CMS14, LinBob_Nonlinearity15, Gavish_arXiv17}.  The off-diagonal entries of ${\bf G}$ should be carefully chosen when cross diffusion of different ionic species is taken into account. 

\section{Modified Poisson--Boltzmann Equation}\label{s:MPB}
We investigate the excluded volume effect on the steady state of the MPNP equations \reff{SMPNPi}. To focus on studying our treatment of volume exclusion, we simply use Dirichlet boundary conditions for the electrostatic potential, i.e., $\psi =\psi_B $ on $\partial \Omega$. From \reff{SMPNPi}, we obtain equilibrium distributions of concentrations in terms of the electrostatic potential:
\[
c^l(\psi)=\alpha_l^{-1} W_0\left(\alpha_l \eta_l e^{-q_l \psi}  \right),
\]
where $W_0(\cdot)$ is the principal branch of the Lambert function\cite{Lambert_ACM96}, and $\eta_l$ is a positive constant determined by $\eta_l=c_{\infty}^l e^{\alpha_l c_{\infty}^l}$. Here, $c_{\infty}^l$ is the ionic concentration when the electrostatic potential vanishes.  As such, we have a modified Poisson--Boltzmann (MPB) equation
\begin{equation}\label{MPB}
- \epsilon\Delta \psi  = \sum_{l=1}^Mq_l \alpha_l^{-1} W_0\left(\alpha_l \eta_l e^{-q_l \psi}  \right) \quad \mbox{with}~ \psi =\psi_B  ~\mbox{on}~ \partial \Omega.
\end{equation}
Following the notation used in Refs. \cite{Li_Nonlinearity09,LiLiuXuZhou_Nonliearity13}, we define
\[
V(\phi)= - \sum_{l=1}^M q_l \int_0^\phi c^l(\zeta) d\zeta, ~  \zeta \in \R.
\]
\begin{lemma}\label{l:V}
The function $V: \R \to \R$ is a $C^\infty$ function. Moreover, it is a strictly convex function that has a bounded second derivative, $\displaystyle \mbox{Min}_{\phi \in \R} V(\phi) = V(0) =0$, $V'(0)=0$, and $\lim_{\phi \to \pm \infty} V(\phi) = + \infty$.
\end{lemma}

\noindent
{\bf Proof.}
Since $W_0(u)$ is an analytical function for $u>0$, it is easy to show $V(\cdot)$ is a $C^\infty$ function. Now we verify that
\[
V'(0)= -  \sum_{l=1}^M q_l c^l (0) = -  \sum_{l=1}^M q_l \alpha_l^{-1} W_0\left(\alpha_l \eta_l \right) = -  \sum_{l=1}^M q_l  c_{\infty}^l = 0,
\]
where we use the bulk neutrality condition in the last equation. Also, we have
\begin{align*}
V''(\phi) = -  \sum_{l=1}^M q_l \left[c^l (\phi)\right]' =  \sum_{l=1}^M q_l^2 \eta_l  e^{-q_l \phi} W_0^{'}\left(\alpha_l \eta_l e^{-q_l \phi}\right)
             =  \sum_{l=1}^M q_l^2 \alpha_l^{-1} \frac{W_0\left(\alpha_l \eta_l e^{-q_l \psi}  \right)}{1+W_0\left(\alpha_l \eta_l e^{-q_l \psi}  \right)},
\end{align*}
where in the last equation we use the identity
$$W'(u)=\frac{W_0(u)}{u\left[1+W_0(u)\right]}.$$
It is easy to see that
$$0<V''(\phi)< \sum_{l=1}^M q_l^2 \alpha_l^{-1}.$$ Therefore, $V(\phi)$ achieves its minimum value $V(0)=0$, and $V'(\phi) >0$ for $\phi>0$ and $V'(\phi) <0$ for $\phi<0$. Simple calculations can verify that $\lim_{\phi \to \pm \infty} V(\phi) = + \infty$. \qed

We now consider the existence of a weak solution to the boundary value problem \reff{MPB}. We use standard notation for Sobolev spaces\cite{GilbargTrudinger98}. Let $$H_{\rm \psi_B}^1(\Omega) = \left\{ \phi \in H^1(\Omega):\phi=\psi_{B}~\text{on}~\partial \Omega \right\}. $$
\begin{theorem}\label{t:Solun}
Let $\Omega$ be a nonempty, bounded, and open subset of $\R^3$. Assume the boundary $\partial \Omega$ is of $C^2$. There exists a unique weak solution $\psi \in H_{\rm \psi_B}^1(\Omega) \cap L^\infty (\Omega)$ to the boundary value problem \reff{MPB}.
\end{theorem}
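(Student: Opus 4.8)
The plan is to recognize the boundary value problem \reff{MPB} as the Euler--Lagrange equation of a convex variational problem and to apply the direct method of the calculus of variations. The crucial observation is that the right-hand side of \reff{MPB} is precisely $-V'(\psi)$, where $V$ is the potential introduced before Lemma \ref{l:V}; indeed $V'(\phi) = -\sum_{l=1}^M q_l c^l(\phi) = -\sum_{l=1}^M q_l \alpha_l^{-1} W_0(\alpha_l \eta_l e^{-q_l \phi})$, so that \reff{MPB} reads $-\epsilon \Delta \psi + V'(\psi) = 0$. Hence a weak solution of \reff{MPB} is exactly a critical point in $H_{\rm \psi_B}^1(\Omega)$ of the functional
\begin{equation*}
I[\psi] = \int_\Omega \left( \frac{\epsilon}{2} |\nabla \psi|^2 + V(\psi) \right) dx.
\end{equation*}
First I would fix an extension $g \in H^1(\Omega) \cap L^\infty(\Omega)$ of the boundary datum $\psi_B$, so that $H_{\rm \psi_B}^1(\Omega) = g + H_0^1(\Omega)$, and verify that $I$ is finite there: since $V(0)=0$, $V'(0)=0$ and $0 < V'' < \sum_{l} q_l^2 \alpha_l^{-1}$ by Lemma \ref{l:V}, one has the quadratic bound $0 \le V(\phi) \le \tfrac{1}{2}\bigl(\sum_{l} q_l^2\alpha_l^{-1}\bigr)\phi^2$, whence $\int_\Omega V(\psi)\,dx < \infty$ for every $\psi \in H^1(\Omega) \subset L^2(\Omega)$.

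Next I would run the direct method. Coercivity of $I$ over $g + H_0^1(\Omega)$ follows from the Poincar\'e inequality applied to $\psi - g \in H_0^1(\Omega)$ together with $V \ge 0$, so any minimizing sequence is bounded in $H^1(\Omega)$ and, by reflexivity, has a weakly convergent subsequence. Weak lower semicontinuity is clear: the Dirichlet term is convex and strongly continuous, hence weakly lower semicontinuous, while $\int_\Omega V(\psi)\,dx$ is in fact weakly continuous, because a weakly $H^1$-convergent sequence converges strongly in $L^2$ by Rellich--Kondrachov and the quadratic growth of $V$ permits passage to the limit. A minimizer $\psi$ therefore exists. Taking the first variation along $\phi \in H_0^1(\Omega)$ --- the differentiation under the integral being justified by the linear growth $|V'(\phi)| \le (\sum_{l} q_l^2 \alpha_l^{-1})|\phi|$ --- yields $\epsilon \int_\Omega \nabla \psi \cdot \nabla \phi\,dx + \int_\Omega V'(\psi)\phi\,dx = 0$, which is exactly the weak form of \reff{MPB}. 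Uniqueness is immediate from strict convexity: if $\psi_1, \psi_2$ are two weak solutions, subtracting their weak forms and testing with $\psi_1 - \psi_2 \in H_0^1(\Omega)$ gives $\epsilon \int_\Omega |\nabla(\psi_1 - \psi_2)|^2\,dx + \int_\Omega (V'(\psi_1) - V'(\psi_2))(\psi_1 - \psi_2)\,dx = 0$, and since $V'$ is strictly increasing ($V'' > 0$), both terms are nonnegative and hence both vanish, forcing $\psi_1 = \psi_2$.

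Finally, for the $L^\infty$ bound I would exploit the sign structure recorded in Lemma \ref{l:V}, namely $V'(\phi) > 0$ for $\phi > 0$ and $V'(\phi) < 0$ for $\phi < 0$. Setting $K = \|\psi_B\|_{L^\infty(\partial\Omega)}$ and testing the weak equation with the truncation $(\psi - K)^+ \in H_0^1(\Omega)$, the potential term $\int_{\{\psi > K\}} V'(\psi)(\psi - K)^+\,dx$ is nonnegative while the Dirichlet term equals $\epsilon \int_\Omega |\nabla(\psi-K)^+|^2\,dx$; since their sum vanishes, $(\psi - K)^+ \equiv 0$, i.e.\ $\psi \le K$ a.e.\ Testing instead with $(-K-\psi)^+ \in H_0^1(\Omega)$ gives $\psi \ge -K$, so that $\|\psi\|_{L^\infty(\Omega)} \le \|\psi_B\|_{L^\infty(\partial\Omega)}$. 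I expect this last step --- the $L^\infty$ estimate --- to be the main point of care: the $H^1$ existence and uniqueness are routine once Lemma \ref{l:V} is in hand, whereas boundedness genuinely relies on the monotone sign of the nonlinearity together with $\psi_B \in L^\infty(\partial\Omega)$, the $C^2$ regularity of $\partial\Omega$ serving to supply the requisite trace and extension theory.
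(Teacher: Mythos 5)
Your proof is correct, but it is worth noting how it relates to what the paper actually does: the paper's entire proof of Theorem \ref{t:Solun} consists of verifying (via Lemma \ref{l:V}) that the nonlinearity satisfies the structural hypotheses of Theorem 2.1 of \cite{Li_SIMA09} (with the corrected proof in \cite{Li_SIMA11}) and then invoking that theorem without further argument. What you have written is, in effect, a self-contained reconstruction of that cited result specialized to the present setting: the identification of \reff{MPB} as the Euler--Lagrange equation of $I[\psi]=\int_\Omega\left(\tfrac{\epsilon}{2}|\nabla\psi|^2+V(\psi)\right)dx$, the direct method (coercivity from Poincar\'e and $V\ge 0$; weak lower semicontinuity from convexity plus the quadratic bound $0\le V(\phi)\le\tfrac12\bigl(\sum_l q_l^2\alpha_l^{-1}\bigr)\phi^2$), uniqueness from strict monotonicity of $V'$, and the $L^\infty$ bound $\|\psi\|_{L^\infty(\Omega)}\le\|\psi_B\|_{L^\infty(\partial\Omega)}$ by Stampacchia truncation with $(\psi-K)^{\pm}$, which exploits exactly the sign property of $V'$ recorded in Lemma \ref{l:V}. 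All the steps hold up: the linear growth of $V'$ (from $V'(0)=0$ and $0<V''<\sum_l q_l^2\alpha_l^{-1}$) makes $I$ differentiable and justifies $H_0^1$ test functions, and $(\psi-K)^{+}$ does lie in $H_0^1(\Omega)$ because its trace vanishes. The one caveat, which the paper shares implicitly, is that $\psi_B$ must admit an $H^1(\Omega)\cap L^\infty(\Omega)$ extension (equivalently $\psi_B\in H^{1/2}(\partial\Omega)\cap L^\infty(\partial\Omega)$), needed both for $H_{\rm \psi_B}^1(\Omega)$ to be nonempty and for your constant $K$ to be finite. What your route buys is a complete, elementary argument readable without external references; what the paper's route buys is brevity, delegating the delicate points (the very ones that required the correction in \cite{Li_SIMA11}) to the literature.
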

\noindent
{\bf Proof.} From Lemma \ref{l:V}, we know that the Theorem 2.1 given in \cite{Li_SIMA09} (and a correction of the proof in \cite{Li_SIMA11}) apply to our case. We therefore omit the proof here.  \qed

To explore more about the MPB model, we consider the limit of weak electrostatic potential, which gives a modified Debye screening length due to the  excluded volume effect.
\begin{theorem}\label{t:DebyeLen}
In the limit of weak electrostatic potential, i.e.,  $|\psi| \ll 1$,  the modified Debye screening length is given by
$$\hat{\lambda}_D^W =\left[\sum_{l=1}^M \frac{q_l^2  c_{\infty}^l }{\epsilon \left(1+\alpha_l c_{\infty}^l\right) }  \right]^{-\frac{1}{2}}.$$
\end{theorem}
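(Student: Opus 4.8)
The plan is to linearize the MPB equation \reff{MPB} about the bulk reference state $\psi = 0$ and to identify the Debye screening length as the characteristic decay length of the resulting Debye--H\"uckel equation. The starting observation is that the right-hand side of \reff{MPB} is precisely $-V'(\psi)$, where $V$ is the potential defined just before Lemma \ref{l:V}; indeed, by construction $V'(\phi) = -\sum_{l=1}^M q_l c^l(\phi)$, so \reff{MPB} may be rewritten compactly as $-\epsilon \Delta \psi = -V'(\psi)$.

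Next I would Taylor expand $V'(\psi)$ around $\psi = 0$. Lemma \ref{l:V} gives $V'(0) = 0$, so $V'(\psi) = V''(0)\psi + O(\psi^2)$, and in the weak-potential regime $|\psi| \ll 1$ discarding the quadratic remainder produces the linearized equation $\epsilon \Delta \psi = V''(0)\psi$. It remains to evaluate $V''(0)$ from the expression for $V''$ derived in the proof of Lemma \ref{l:V}. The key simplification is the Lambert-W identity $W_0(\alpha_l \eta_l) = \alpha_l c_\infty^l$, which follows from $\eta_l = c_\infty^l e^{\alpha_l c_\infty^l}$ together with $W_0(x e^x) = x$ for $x \geq 0$. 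Substituting $\phi = 0$ then yields $V''(0) = \sum_{l=1}^M q_l^2 c_\infty^l / (1 + \alpha_l c_\infty^l)$, so the linearized equation takes the standard form $\Delta \psi = (\hat{\lambda}_D^W)^{-2}\psi$ with $(\hat{\lambda}_D^W)^{-2} = \epsilon^{-1}\sum_{l=1}^M q_l^2 c_\infty^l/(1 + \alpha_l c_\infty^l)$, which is exactly the claimed screening length.

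The computation is largely routine, and genuine care is needed in only two places. First, the Lambert-W identity must be applied correctly to reduce $W_0(\alpha_l \eta_l)$ to $\alpha_l c_\infty^l$, since a stray factor here would propagate directly into the final screening length. Second, one should make explicit that the Debye screening length is \emph{defined} as the reciprocal square-root of the coefficient multiplying $\psi$ in the linearized Debye--H\"uckel equation, so that reading off $\hat{\lambda}_D^W$ is a matter of definition once the linearization is in hand. The main conceptual point, rather than a technical obstacle, is to justify that the linearization is legitimate in the regime $|\psi| \ll 1$; this is underwritten by the boundedness and smoothness of $V''$ established in Lemma \ref{l:V}, which guarantee that the $O(\psi^2)$ remainder is uniformly controlled and negligible compared with the linear term.
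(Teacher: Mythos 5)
Your proof is correct and takes essentially the same route as the paper: both linearize the MPB equation about $\psi = 0$ by Taylor expansion, using the identity $W'(u)=W_0(u)/(u[1+W_0(u)])$, the evaluation $W_0(\alpha_l\eta_l)=\alpha_l c_{\infty}^l$, and bulk neutrality to kill the constant term. Your packaging of the computation through $V$ and Lemma \ref{l:V} (so that $V'(0)=0$ absorbs the neutrality condition and $V''(0)$ directly yields the screening coefficient) is a tidy reorganization of the paper's inline expansion, not a different argument.
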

\noindent
{\bf Proof.}
By Taylor expansions, we have for  $|\psi| \ll 1$ that
\begin{align*}
-\epsilon \Delta \psi &=  \sum_{l=1}^Mq_l \alpha_l^{-1} W_0\left(\alpha_l \eta_l e^{-q_l \psi}  \right) \\
                                &=   \sum_{l=1}^Mq_l \alpha_l^{-1} \left[ W_0\left(\alpha_l \eta_l \right) -\alpha_l \eta_l q_l W_0^{'}\left(\alpha_l \eta_l \right) \psi +   O(\psi^2)   \right] \\
                                &=  \sum_{l=1}^Mq_l \alpha_l^{-1}  W_0\left(\alpha_l \eta_l \right) - \sum_{l=1}^M \frac{q_l^2 \alpha_l^{-1} W_0\left(\alpha_l \eta_l \right) }{1+W_0\left(\alpha_l \eta_l \right)} \psi  + O(\psi^2) \\
                                &=  \sum_{l=1}^M q_l  c_{\infty}^l  - \sum_{l=1}^M \frac{q_l^2  c_{\infty}^l }{1+\alpha_l c_{\infty}^l } \psi  + O(\psi^2).
\end{align*}
Ignoring $O(\psi^2)$ terms, we have the Debye-H\"uckel equation
\[
\Delta \psi =(\hat{\lambda}_D^{W})^{-2} \psi
\]
with the Debye screening length $\hat{\lambda}_D^W =\left[\sum_{l=1}^M \frac{q_l^2  c_{\infty}^l }{\epsilon \left(1+\alpha_l c_{\infty}^l\right) }  \right]^{-\frac{1}{2}}$. This completes the proof. \qed

We remark that, in contrast to the classical Debye screening length $\lambda_D =\left(\sum_{l=1}^M q_l^2  c_{\infty}^l / \epsilon \right)^{-\frac{1}{2}}$, the excluded volume effect leads to a longer modified Debye screening length.  This result agrees with other PB models with volume exclusion\cite{LiLiuXuZhou_Nonliearity13}. In our numerical simulations, we observe that less counterions are adsorbed to charged surface on account of the excluded volume effect, giving rise to higher surface electrostatic potential. This indicates that the Debye screening length becomes longer.

Near charged surface, it is of practical interest to study the behavior of counterions.  Denote by $\mathcal J$ the set of indice for counterions speices.  It is reasonable to assume that the electrostatic potential near surface has an opposite sign to the counterions, i.e., $q_l \psi <0$ for $l\in \mathcal J$.  When the surface potential is strong ($|\psi| \gg 1$), we consider the limit that $e^{-q_l \psi} \gg 1$. From an asymptotic approximation that $W_0(u) \sim \ln(u)$ for large positive $u$, we have by keeping leading order terms that
\begin{align*}
-\epsilon \Delta \psi &=  \sum_{l\in \mathcal J} q_l \alpha_l^{-1} \left[\ln\left(\alpha_l \eta_l\right) -q_l \psi \right].
\end{align*}
We rewrite it in the form
\begin{align*}
\Delta \psi =(\hat{\lambda}_D^{S})^{-2} \psi+ R,
\end{align*}
where $\hat{\lambda}_D^{S} =\left(\sum_{l\in \mathcal J}  \frac{q_l^2}{\epsilon \alpha_l}  \right)^{-\frac{1}{2}}$ and the constant $R=-\sum_{l\in \mathcal J}  \frac{q_l^2}{\epsilon \alpha_l} \ln \left (\alpha_l \eta_l \right)$.
It is interesting to see that, in the strong limit of electrostatic potential, the leading order terms of the MPB \reff{MPB} becomes an equation resembling the Debye--H\"uckel equation with a constant charge source arising from the bulk. The corresponding screening length $\hat{\lambda}_D^{S}$ depends on the parameters $\alpha_l$ arising from the excluded volume effect, rather than bulk concentrations.
\section{Dynamics}\label{s:dynamics}
In this and following sections, we study the dynamics of ionic concentrations and electrostatic potential in a closed system that has an impenetrable boundary with certain surface charge density. We focus on the physical properties of the system, and develop a suitable numerical scheme to capture the properties discretely. The corresponding discrete properties are established and confirmed by numerical simulations. 

To model the closed system with boundary surface charge, we use zero-flux boundary conditions for ionic concentrations:
\[
\left( \nabla  c^l+\alpha_l c^l \nabla c^l+  q_l  c^l \nabla \psi  \right)\cdot\textbf{n} =0 \quad  \mbox{on}~ \partial \Omega,
\]
and Neumann boundary conditions for the electrostatic potential:
\[
\epsilon \nabla\psi \cdot\textbf{n} =\sigma \quad  \mbox{on}~ \partial \Omega.
\]
Here $\textbf{n}$ is the exterior unit normal vector, and $\sigma$ is the surface charge density. The initial conditions,
\[
c^l(x, 0) = c^l_{\rm in}(x),
\]
are set to satisfy the neutrality condition
\[
\int_{\partial \Omega}\sigma dS + \sum_{l=1}^M\int_{\Omega}  q_l c_{\rm in}^l dx  =0,
\]
which is necessary for solvability of the problem. In summary, we study the following initial-boundary value problem
\begin{equation}\label{IBVP}
\left\{
\begin{aligned}
 & \partial_t c^l  =\gamma\nabla \cdot \left( \nabla  c^l+\alpha_l c^l \nabla c^l+  q_l  c^l \nabla \psi  \right), \quad l=1,2,\cdots, M, \quad  x\in  \Omega, \; t>0, \\
 & - \epsilon\Delta \psi  = \sum_{l=1}^Mq_lc^l,  \quad x\in  \Omega, \; t>0, \\
 & c^l(0, x)  =c^l_{\rm in}(x),  \quad x\in \Omega,\\
 & \epsilon \nabla\psi \cdot\textbf{n} =\sigma, \quad  \left( \nabla  c^l+\alpha_l c^l \nabla c^l+  q_l  c^l \nabla \psi  \right)\cdot\textbf{n} =0,\quad
 \quad x\in \partial\Omega,\; t>0.
\end{aligned}
\right.
\end{equation}


Since $c^l$ represents concentrations of ions, it is reasonable to assume that $c^l(t,x) > 0$ for $x\in \Omega$ and $t>0$. By zero-flux boundary conditions and the Nernst-Planck equations, we have ionic mass conservation in the sense that
\begin{equation*}
\begin{aligned}
\frac{d}{dt}\int_{\Omega}c^l(t, x)dx &=\int_{\Omega} \gamma\nabla \cdot \left( \nabla  c^l+\alpha_l c^l \nabla c^l+  q_l  c^l \nabla \psi  \right) dx  \\
                                                       &=\int_{\partial \Omega} \gamma \left( \nabla  c^l+\alpha_l c^l \nabla c^l+  q_l  c^l \nabla \psi  \right)\cdot\textbf{n} dS =0.
\end{aligned}
\end{equation*}
For the MPNP model \reff{SMPNPi}, we propose the following total free energy
\[ F =  \sum_{l=1}^M\int_{\Omega}   c^l\ln c^l dx + \frac{1}{2} \sum_{l=1}^M\int_{\Omega} \alpha_l {(c^l)}^2 dx +\frac{1}{2}\sum_{l=1}^M \int_{\Omega}   q_lc^l\psi dx + \frac{1}{2}\int_{\partial\Omega} \sigma \psi dS,
\]
where the first term represents entropic contributions, the second term is the ionic interaction energy due to volume exclusion, and the third and fourth terms are the electrostatic energies.  We consider time evolution of the free energy
\begin{equation*}
\begin{aligned}
\frac{d F}{dt} &=  \sum_{l=1}^M\int_{\Omega} \partial_t c^l \left( \ln c^l +1 + \alpha_l c^l +q_l \psi \right) dx \\
                       &\quad+\sum_{l=1}^M \int_{\Omega}   \frac{1}{2} q_l \partial_t c^l\psi + \frac{1}{2} q_l  c^l \partial_t\psi  -q_l \partial_t c^l  \psi  dx +\frac{1}{2}\int_{\partial\Omega} \partial_t \sigma \psi  + \sigma \partial_t \psi dS \\
                      &=  \sum_{l=1}^M\int_{\Omega} \gamma \left[ \nabla \cdot \left( \nabla  c^l+\alpha_l c^l \nabla c^l+  q_l  c^l \nabla \psi  \right) \right] \left( \ln c^l +1 + \alpha_l c^l +q_l \psi \right) dx \\
                       &\quad+ \int_{\Omega}   -\frac{1}{2} \epsilon \partial_t \psi \Delta \psi  + \frac{1}{2} \epsilon \Delta (\partial_t \psi) \psi   dx +\frac{1}{2}\int_{\partial\Omega} \partial_t \sigma \psi  + \sigma \partial_t \psi dS \\
                       &= - \sum_{l=1}^M \int_{\Omega}  \frac{\gamma}{ c^l} \left| \nabla c^l+ \alpha_l c^l \nabla c^l + q_lc^l\nabla \psi \right |^2 dx + \int_{\partial\Omega}  \partial_t\sigma  \psi dS.
\end{aligned}
\end{equation*}
Assuming that the surface charge density is time independent, we have free energy dissipation law $\frac{d F}{dt} \leq 0$.
In summary, we assume the following three dynamical properties for any solution to (\ref{SMPNPi}):
\begin{subequations}
\begin{align}
&\mbox{(P1):}  \qquad c^l (x,t) > 0 \quad \mbox{for}~ x\in \Omega ~\mbox{and}~  t>0,\label{str1} \\
&\mbox{(P2):}  \qquad \int_{\Omega} c^l(t, x )\,dx=\int_{\Omega} c^l_{\rm in}(x)\,dx \quad \mbox{for}~ t>0,\label{str2}\\
&\mbox{(P3):}  \qquad \frac{d}{dt}   F  \leq 0 \quad \mbox{for}~ t>0. \label{str3}
\end{align}
\end{subequations}
\section{Numerical Method}\label{s:NumMethod}
\subsection{Reformulation}
For conciseness we present our method in $\R^2$, while the algorithm can be extended to $\R^3$ in a dimension by dimension manner.
We formally reformulate the system by using Slotboom variables \cite{LW14}
$$g^l(t,x,y)=c^l(t,x,y)e^{q_l\psi(t,x,y)+\alpha_l c^l(t,x,y)},$$
to obtain
the following two sets of equations
{ 
\begin{align}
&c^l_t  =\gamma(e^{-(q_l\psi+\alpha_l c^l)}g^l_x)_x +\gamma(e^{-(q_l\psi+\alpha_l c^l)}g^l_y)_y,\\
&-\epsilon(\psi_{xx}+\psi_{yy}) =\sum_{l=1}^Mq_lc^l.\label{Poisson2d}
\end{align}

We now describe our algorithm by first  partitioning the square domain $[a,b]\times[a',b']$ with a uniform partition of $x_i=a+h(i-1/2)$ and $y_j=a'+h(j-1/2)$ for  $i=1,\cdots,N_x$ and $j=1,\cdots,N_y$.
\subsection{Algorithm} \label{algorithm}
\begin{itemize}
\item[1.] We use $c^l_{i,j}$ to approximate $c^l(t, x_i,y_j)$ and $\psi_{i,j}$ to approximate $\psi(t, x_i,y_j)$.  Given $c^l_{i,j}, i=1, \cdots,  N_x$, $j=1, \cdots,  N_y$, we compute the potential $\psi_{i,j}$ by
\begin{equation}\label{pc}
-\epsilon\frac{\psi_{i+1,j}-2\psi_{i,j}+\psi_{i-1,j}}{h^2}-\epsilon\frac{\psi_{i,j+1}-2\psi_{i,j}+\psi_{i,j-1}}{h^2}=\sum_{l=1}^Mq_lc^l_{i,j},
\end{equation}
where $\psi_{i,1}-\psi_{i,0}=-\sigma_{i,1/2} h/\epsilon$, $\psi_{i,N+1}-\psi_{i,N}=\sigma_{i,N+1/2} h/\epsilon$, $\psi_{1,j}-\psi_{0,j}=-\sigma_{1/2,j} h/\epsilon$, and $\psi_{N+1,j}-\psi_{N,j}=\sigma_{N+1/2,j} h/\epsilon$. Here $\sigma_{i,1/2}$, $\sigma_{i,N+1/2}$, $\sigma_{1/2,j}$ and $\sigma_{N+1/2,j}$ are boundary conditions at $y=a'$, $y=b'$,$x=a$ and $x=b$, respectively.   
For definiteness,  we  set  $\psi_{1,1}=0$ at any time $t$ to single out  a particular solution since $\psi$ is unique up to an additive constant.
\item[2.] With the above obtained $\psi_{i,j}, i=1,\cdots,N_x$, $j=1,\cdots, N_y$, the semi-discrete approximation of  the concentration $c^l$ satisfies
\begin{align}\label{tc}
\frac{d}{dt}c^l_{i,j}&=\frac{\gamma}{h} \left[e^{-(q_l\psi_{i+{\frac{1}{2}},j}+\alpha_l c^l_{i+{\frac{1}{2}},j})}\widehat{g^l}_{x,i+{\frac{1}{2}},j} -e^{-(q_l\psi_{i-{\frac{1}{2}},j}+\alpha_l  c^l_{i-{\frac{1}{2}},j})}\widehat{g^l}_{x,i-{\frac{1}{2}},j}\right] \nonumber\\
&+\frac{\gamma}{h} \left[
e^{-(q_l\psi_{i,j+{\frac{1}{2}}}+\alpha_l  c^l_{i,j+{\frac{1}{2}}})}\widehat{g^l}_{y,i,j+{\frac{1}{2}}} -e^{-(q_l\psi_{i,j-{\frac{1}{2}}}+\alpha_l  c^l_{i,j-{\frac{1}{2}}})}\widehat{g^l}_{y,i,j-{\frac{1}{2}}}\right]:=Q_{i,j}(c^l,\psi),
\end{align}
where
\begin{align*}
\psi_{i+{\frac{1}{2}},j}&=\frac{\psi_{i+1,j}+\psi_{i,j}}{2},
\quad \psi_{i,j+{\frac{1}{2}}}=\frac{\psi_{i,j+1}+\psi_{i,j}}{2}, \notag\\
c^l_{i+{\frac{1}{2}},j}&=\frac{c^l_{i+1,j}+c^l_{i,j}}{2},
\quad c^l_{i,j+{\frac{1}{2}}}=\frac{c^l_{i,j+1}+c^l_{i,j}}{2}, \notag\\
\widehat{g^l}_{x,i+{\frac{1}{2}},j}&=\frac{g^l_{i+1,j}-g^l_{i,j}}{h}=\frac{c^l_{{i+1,j}}e^{q_l\psi_{i+1,j}+\alpha_l c^l_{i+1,j}}-c^l_{i,j}e^{q_l\psi_{i,j}+\alpha_l c^l_{i,j}}}{h},\notag\\
\widehat{g^l}_{y,i,j+{\frac{1}{2}}}&=\frac{g^l_{i,j+1}-g^l_{i,j}}{h}=\frac{c^l_{{i,j+1}}e^{q_l\psi_{i,j+1}+\alpha_l c^l_{i,j+1}}-c^l_{i,j}e^{q_l\psi_{i,j}+\alpha_l c^l_{i,j}}}{h},\notag\\
\widehat{g^l}_{x, 1/2,j}&=0, ~ \widehat{g^l}_{x, N_x+1/2,j}=0, ~ \widehat{g^l}_{y, i,1/2}=0, ~\mbox{and }~ \widehat{g^l}_{y,i,N_y+1/2}=0.
\end{align*}
\item[3.] Discretize $t$ uniformly and let $t_n=t_0+kn$, $c^{l,n}_{i,j} \sim c(t_n,x_i,y_j)$ and $\psi_{i,j}^n \sim \psi(t_n,x_i,y_j)$, we then solve \eqref{Poisson2d} by
\begin{equation}\label{cq}
\frac{c_{i,j}^{l,n+1}-c_{i,j}^{l,n}}{k}=Q_{i,j}(c^{l,n},\psi^n).
\end{equation}
\end{itemize}

\subsection{Numerical Properties}\label{s:Properties}
In this section we investigate the properties of our algorithm.  We will show the desired properties, such as conservation and free energy dissipation for our Algorithm \ref{algorithm} in the following.
\begin{theorem}\label{thm:properties}  Let $c^l_{i,j}=c^l(t, x_i, y_j)$ and  $\psi_{i,j}=\psi(t, x_i, y_j)$ be semi-discrete solutions from \eqref{tc} and \eqref{pc} respectively;  and $c^{l,n}_{i,j}=c^l(t_n, x_i, y_j)$ and  $\psi_{i,j}^n=\psi(t_n, x_i, y_j)$ be the fully discrete solutions from \eqref{cq}.
\begin{itemize}
\item[1.]  Both semi-discrete scheme (\ref{tc}) and Euler forward discretization (\ref{cq})  are conservative in the sense that  the total concentration $c_{i,j}$ remains unchanged in time,
\begin{align}
\label{conserve1}
\frac{d}{dt} \sum_{i=1}^{N_x}\sum_{j=1}^{N_y}c^l_{i,j}h^2 & =0, \quad l=1,\cdots, M,\quad \quad t>0\\ \label{conserve2}
\sum_{i=1}^{N_x}\sum_{j=1}^{N_y}c^{l,n+1}_{i,j}h^2 & =\sum_{i=1}^{N_x}\sum_{j=1}^{N_y}c_{i,j}^{l,n} h^2,\quad l=1,\cdots, M .
\end{align}
\item[2.] Assuming $\sigma$ is independent of time and $c^l_{i,j}$ are positive, the semi-discrete {free energy}
\begin{align}\label{DiscretEnergy}
F&=h^2\sum_{l=1}^M\sum_{i=1}^{N_x}\sum_{j=1}^{N_y}\left(c^l_{i,j}\ln c^l_{i,j}+\frac{1}{2}q_lc^l_{i,j}\psi_{i,j}+\frac{1}{2}\alpha_l \left(c^l_{i,j}\right)^2\right) \notag\\
  &+  \frac{h}{2}\sum_{j=1}^{N_y}(\sigma_{1/2,j}\psi_{1,j}+\sigma_{N+1/2,j}\psi_{N,j}) +\frac{h}{2} \sum_{i=1}^{N_x}(\sigma_{i,1/2}\psi_{i,1}+\sigma_{i,N+1/2}\psi_{i,N})
\end{align}

 satisfies
\begin{align}\label{fd}
\frac{d}{dt} F=  &-\frac{\gamma}{h^2}\sum_{l=1}^M\sum_{i=1}^{N_x}\sum_{j=1}^{N_y-1}e^{-q_l(\psi_{i,j+1}+\psi_{i,j})/2-\alpha_l (c^l_{i,j+1}+c^l_{i,j})/2}\left(\ln g^l_{i,j+1}-\ln g^l_{i,j}\right)(g^l_{i,j+1}-g^l_{i,j})\notag\\
               &-\frac{\gamma}{h^2}\sum_{l=1}^M\sum_{i=1}^{N_x-1}\sum_{j=1}^{N_y}e^{-q_l(\psi_{i+1,j}+\psi_{i,j})/2-\alpha_l (c^l_{i+1,j}+c^l_{i,j})/2}\left(\ln g^l_{i+1,j}-\ln g^l_{i,j}\right)(g^l_{i+1,j}-g^l_{i,j})
\notag\\
&\leq 0,
\end{align}
therefore the semi-discrete free energy is non-increasing.
\end{itemize}
\end{theorem}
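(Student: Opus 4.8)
The plan is to handle the two parts separately, mirroring at the discrete level the continuous arguments for mass conservation and free-energy dissipation given in Section \ref{s:dynamics}.

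\textbf{Conservation.} For \eqref{conserve1} I would sum the semi-discrete equation \eqref{tc} over all $i,j$. Since the right-hand side $Q_{i,j}$ is a discrete divergence of numerical fluxes in $x$ and $y$, the sum telescopes in each coordinate direction and only the boundary fluxes survive. These vanish by construction, because $\widehat{g^l}_{x,1/2,j}=\widehat{g^l}_{x,N_x+1/2,j}=\widehat{g^l}_{y,i,1/2}=\widehat{g^l}_{y,i,N_y+1/2}=0$; hence $\sum_{i,j}Q_{i,j}=0$ and $\frac{d}{dt}\sum_{i,j}c^l_{i,j}h^2=0$. For \eqref{conserve2} I would sum the Euler update \eqref{cq}; the increment is $k\,h^2\sum_{i,j}Q_{i,j}(c^{l,n},\psi^n)=0$ by the same telescoping, giving exact conservation at every time step.

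\textbf{Energy dissipation.} Here I would differentiate \eqref{DiscretEnergy} in time and reorganize exactly as in the continuous derivation of Section \ref{s:dynamics}. Writing $\frac{d}{dt}(c\ln c)=(\ln c+1)\dot c$ and $\frac{d}{dt}(\tfrac12\alpha c^2)=\alpha c\dot c$, and absorbing a full $q_l\psi\,\dot c^l$ into the entropic group (compensated by $-q_l\dot c^l\psi$), the derivative splits into $h^2\sum_{l,i,j}(\ln c^l_{i,j}+1+\alpha_l c^l_{i,j}+q_l\psi_{i,j})\dot c^l_{i,j}$, plus the electrostatic combination $\frac12 h^2\sum_{l,i,j}q_l(c^l_{i,j}\dot\psi_{i,j}-\dot c^l_{i,j}\psi_{i,j})$, plus the time derivative of the surface sums. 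Since $\sigma$ is time-independent, the latter equals $\frac{h}{2}\sum_j(\sigma_{1/2,j}\dot\psi_{1,j}+\sigma_{N+1/2,j}\dot\psi_{N,j})+\frac{h}{2}\sum_i(\sigma_{i,1/2}\dot\psi_{i,1}+\sigma_{i,N+1/2}\dot\psi_{i,N})$. The crux is to show that this surface term exactly cancels the electrostatic combination. Using the discrete Poisson equation \eqref{pc}, $\sum_l q_l c^l_{i,j}=-\epsilon(L\psi)_{i,j}$ and, after differentiating in time, $\sum_l q_l\dot c^l_{i,j}=-\epsilon(L\dot\psi)_{i,j}$, where $L$ is the five-point discrete Laplacian. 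A discrete Green identity (summation by parts in both $x$ and $y$) then rewrites $\frac12 h^2\sum_{i,j}\big[(\sum_l q_l c^l)\dot\psi-(\sum_l q_l\dot c^l)\psi\big]=-\frac{\epsilon}{2}h^2\sum_{i,j}\big[(L\psi)\dot\psi-(L\dot\psi)\psi\big]$ as a pure boundary expression. Because $\sigma$ is constant in time, the ghost-point relations in \eqref{pc} force the discrete normal derivative of $\dot\psi$ to vanish, so the $(L\dot\psi)\psi$ boundary contributions drop and only those carrying $\partial_n\psi=\pm\sigma/\epsilon$ survive; these are precisely the negative of the surface-energy derivative above, and the two cancel. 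I expect this summation-by-parts bookkeeping at the Neumann boundary — tracking ghost points and matching the surviving boundary terms against the surface-energy derivative — to be the main obstacle, since it is where sign and index errors are most likely.

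Once this cancellation is in hand, $\frac{dF}{dt}$ collapses to $h^2\sum_{l,i,j}(\ln c^l_{i,j}+1+\alpha_l c^l_{i,j}+q_l\psi_{i,j})\dot c^l_{i,j}$. I would then note that the weight is exactly $\ln g^l_{i,j}+1$, since $\ln c^l_{i,j}+\alpha_l c^l_{i,j}+q_l\psi_{i,j}=\ln\!\big(c^l_{i,j}e^{q_l\psi_{i,j}+\alpha_l c^l_{i,j}}\big)=\ln g^l_{i,j}$; here the hypothesis $c^l_{i,j}>0$ (hence $g^l_{i,j}>0$) is what makes the logarithm well defined. The constant part $+1$ drops out by the conservation identity $\sum_{i,j}\dot c^l_{i,j}=0$ just established. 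Substituting $\dot c^l_{i,j}=Q_{i,j}$ and performing a final summation by parts in $x$ and $y$ — once more using the vanishing boundary fluxes — transfers the discrete divergence onto $\ln g^l$ and produces the stated double sums, in which each summand has the form (positive exponential)$\times(\ln g^l_{i+1,j}-\ln g^l_{i,j})(g^l_{i+1,j}-g^l_{i,j})$ and its $y$-analogue. Since $t\mapsto\ln t$ is increasing, every factor $(\ln a-\ln b)(a-b)\ge 0$, so each summand is nonnegative and the overall minus sign yields $\frac{dF}{dt}\le 0$, as claimed.
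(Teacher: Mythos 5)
Your proposal is correct and takes essentially the same route as the paper's proof: summing against the zero-flux boundary values for conservation, and for dissipation the same splitting of $\frac{d}{dt}F$ into a $\ln g^l$-weighted entropic part (with the constant $+1$ dropped via $\sum_{i,j}\dot c^l_{i,j}=0$), an antisymmetric electrostatic combination cancelled against the surface-energy terms through the discrete Poisson equation and summation by parts, and the final sign from $(\ln a-\ln b)(a-b)\ge 0$. The only differences are cosmetic: you handle the electrostatic cancellation before the entropic term, and you make explicit the ghost-point identity $\dot\psi_{i,0}=\dot\psi_{i,1}$ (valid for static $\sigma$) that the paper's boundary bookkeeping uses implicitly.
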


\begin{proof}
\begin{itemize}
\item[1.]
With the the zero flux boundary conditions $\hat{g}_{x,\frac{1}{2},j}=0$, $\hat{g}_{x,N+\frac{1}{2},j}=0$, $\hat{g}_{y,i,\frac{1}{2}}=0$ and $\hat{g}_{j,i,N+\frac{1}{2}}=0$, summing \eqref{tc} leads to \eqref{conserve1}. Similarly, summing \eqref{pc} leads to \eqref{conserve2}.

\item[2.]  A direct calculation using $\displaystyle \sum_{i=1}^{N_x}\sum_{j=1}^{N_y} \dot{c^l}_{i,j} := \sum_{i=1}^{N_x}\sum_{j=1}^{N_y} \frac{d}{dt} {c^l}_{i,j} =0$ gives
\begin{align}
\frac{d}{dt}{F} &=h^2 \sum_{l=1}^M\sum_{i=1}^{N_x}\sum_{j=1}^{N_y}\left[\left(\ln c^l_{i,j}+q_l\psi_{i,j}+\alpha_l c^l_{i,j}\right)\dot{c^l}_{i,j}+\frac{1}{2}q_l\left( {c^l}_{i,j}\dot{\psi}_{i,j}-\dot{c^l}_{i,j}\psi_{i,j}\right)\right] \notag\\
  &+  \frac{h}{2}\sum_{l=1}^M\sum_{j=1}^{N_y}\sigma(\dot{\psi}_{1,j}+\dot{\psi}_{N,j}) +\frac{h}{2} \sum_{l=1}^M\sum_{i=1}^{N_x}\sigma(\dot{\psi}_{i,1}+\dot{\psi}_{i,N}).
\end{align}
By \eqref{tc}, we further have
\begin{align}
&\left(\ln c^l_{i,j}+q_l\psi_{i,j}+\alpha_l c^l_{i,j}\right)\dot{c^l}_{i,j}\notag\\
=& \frac{\gamma}{h} \ln g^l_{i,j} \left[e^{-(q_l\psi_{i+{\frac{1}{2}},j}+\alpha_l c^l_{i+{\frac{1}{2}},j})}\widehat{g^l}_{x,i+{\frac{1}{2}},j} -e^{-(q_l\psi_{i-{\frac{1}{2}},j}+\alpha_l  c^l_{i-{\frac{1}{2}},j})}\widehat{g^l}_{x,i-{\frac{1}{2}},j}\right] \nonumber\\
&+\frac{\gamma}{h} \ln g^l_{i,j}\left[
e^{-(q_l\psi_{i,j+{\frac{1}{2}}}+\alpha_l  c^l_{i,j+{\frac{1}{2}}})}\widehat{g^l}_{y,i,j+{\frac{1}{2}}} -e^{-(q_l\psi_{i,j-{\frac{1}{2}}}+\alpha_l  c^l_{i,j-{\frac{1}{2}}})}\widehat{g^l}_{y,i,j-{\frac{1}{2}}}\right].\label{F1}
\end{align}
Summing \eqref{F1} over all $l, i,j$ leads to
\begin{align}
&h^2\sum_{l=1}^M\sum_{i=1}^{N_x}\sum_{j=1}^{N_y}\left[\left(\ln c_{i,j}+q\psi_{i,j}+\alpha_l c_{i,j}\right)\dot{c}_{i,j}\right]\notag\\
=&-\frac{\gamma}{h^2}\sum_{l=1}^M\sum_{i=1}^{N_x}\sum_{j=1}^{N_y-1}e^{-q(\psi_{i,j+1}+\psi_{i,j})/2-\alpha_l (c_{i,j+1}+c_{i,j})/2}\left(\ln g_{i,j+1}-\ln g_{i,j}\right)(g_{i,j+1}-g_{i,j})\notag\\
 &-\frac{\gamma}{h^2}\sum_{l=1}^M\sum_{i=1}^{N_x-1}\sum_{j=1}^{N_y}e^{-q(\psi_{i+1,j}+\psi_{i,j})/2-\alpha_l (c_{i+1,j}+c_{i,j})/2}\left(\ln g_{i+1,j}-\ln g_{i,j}\right)(g_{i+1,j}-g_{i,j}),\label{F1all}
\end{align}
where the zero flux boundary conditions are used again.

Using the discrete Poisson equation \eqref{pc}, we have remaining non-boundary terms in $\frac{d}{dt}F$ as
\begin{align}
&\frac{h^2}{2}\sum_{l=1}^Mq_l\left( {c^l}_{i,j}\dot{\psi}_{i,j}-\dot{c^l}_{i,j}\psi_{i,j}\right)\notag\\
=& -\frac{\epsilon}{2}\left[ ({\psi_{i+1,j}-2\psi_{i,j}+\psi_{i-1,j}}) + ({\psi_{i,j+1}-2\psi_{i,j}+\psi_{i,j-1}}) \right]\dot{\psi}_{i,j}\notag\\
 & +\frac{\epsilon}{2}\left[ ({\dot{\psi}_{i+1,j}-2\dot{\psi}_{i,j}+\dot{\psi}_{i-1,j}}) + ({\dot{\psi}_{i,j+1}-2\dot{\psi}_{i,j}+\dot{\psi}_{i,j-1}}) \right]\dot{\psi}_{i,j}.\label{F2}
\end{align}
Summing \eqref{F2} over all $i,j$ leads to
\begin{align}
&\frac{h^2}{2}\sum_{l=1}^M\sum_{i=1}^{N_x}\sum_{j=1}^{N_y}q_l\left( {c^l}_{i,j}\dot{\psi}_{i,j}-\dot{c^l}_{i,j}\psi_{i,j}\right)\notag\\
= & -\frac{h}{2}\sum_{j=1}^{N_y}(\sigma_{1/2,j}\psi_{1,j}+\sigma_{N+1/2,j}\psi_{N,j}) -\frac{h}{2} \sum_{i=1}^{N_x}(\sigma_{i,1/2}\psi_{i,1}+\sigma_{i,N+1/2}\psi_{i,N})\label{F2all}.
\end{align}

Finally the desired \eqref{fd} follows by combining \eqref{F1all} and \eqref{F2all}, and using the fact that $(\ln\alpha-\ln\beta)(\alpha-\beta)\geq 0$ for any $\alpha>0$ and $\beta>0$.

 \end{itemize}
 \end{proof}

{\bf Remark 2.1.} In Theorem \ref{thm:properties}, we proved the conservation and free energy dissipation in 2D, with the assumption of $c_{i,j}>0$. The proof is readily extensible to 3D systems. For the positivity of $c_{i,j}$, we can only theoretically prove it in 1D for a system of single species in Appendix A. Our numerical simulations, however, indicate that the discrete concentrations $c_{i,j}^n$ remain positive in long time for the MPNP system in high dimension with multiple species. To theoretically prove the positivity in high dimensions ($\R^2$ and $\R^3$), it is critical to establish $L^\infty$ bounds for the numerical solutions of concentrations and electrostatic potential. We are currently working on the matter and will report the findings in our future work.

\section{Numerical Simulations}\label{s:NumSim}
\subsection{Numerical Test}
We first consider a closed system with one species of counterion.  Such a system, for instance, describes a membrane with ionizable groups that release one species of ions into aqueous solutions, giving rise to an oppositely charged membrane with the same amount of charges carried by counterions.  We numerically solve the equations \reff{SMPNPi}  on $\Omega = [0, 1] \times [0, 1]$. We take $q=1, \epsilon=1, \gamma=1$, and $\alpha=4$.  The initial and boundary conditions are given respectively by
\begin{align*}
                         c(0, x,y)=2, \quad \psi(0, x,y)=0,
\end{align*}
and
\begin{align*}
\left (\nabla  c+\alpha c \nabla c+  q  c \nabla \psi \right)\cdot \textbf{n}=0, \quad \epsilon \frac{\partial \psi}{\partial {\bf n}}= \begin{cases}
-1 & \text{}  {\{(x,y) |x=1\text{ or }  y=0\}} \\
0  & \text{else,}
\end{cases}
\qquad  \mbox{on} \quad \partial \Omega.
\end{align*}
\begin{figure}[hpbt]
\centering
\subfigure[$c$ for PNP] {\includegraphics[width=0.4\textwidth]{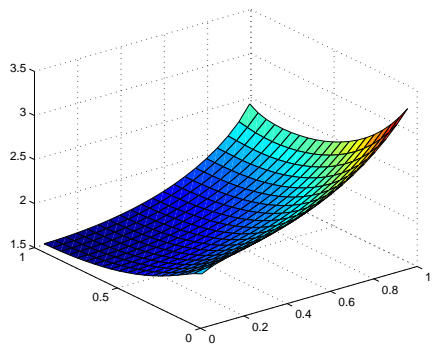}}
	\subfigure[$\psi$ for PNP]{\includegraphics[width=0.4\textwidth]{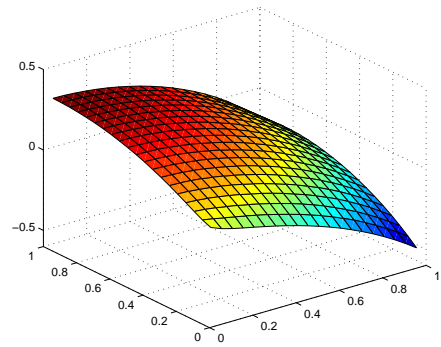}}\\
\subfigure[$c$ for MPNP]{\includegraphics[width=0.4\textwidth]{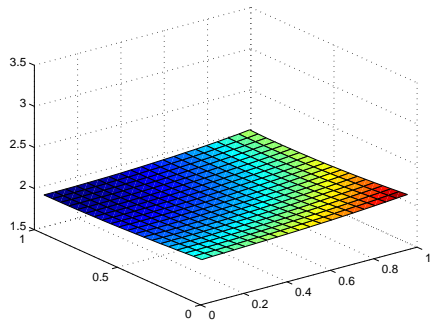}} 
\subfigure[$\psi$ for MPNP]{\includegraphics[width=0.4\textwidth]{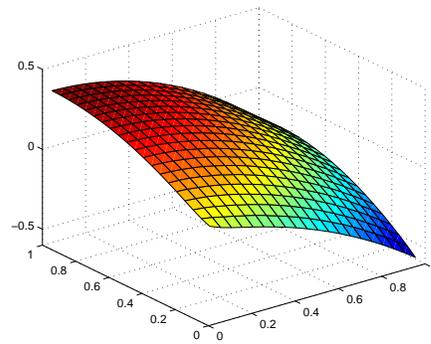}}
\caption{Steady-state solutions of concentration, $c$, and electrostatic potential, $\psi$, for the classical PNP and modified PNP. }
\end{figure}\label{f:SteadyStates2D}
\noindent Note that the results calculated with nonzero $\alpha$ are denoted by the MPNP, and the results obtained with $\alpha=0$ are denoted by the PNP.

Fig.~\ref{f:SteadyStates2D} depicts the steady-state solutions for the classical PNP and MPNP equations. Clearly, we can see that the concentration close to the charge surface for the MPNP is much lower due to the effect of excluded volume of ions. With less ions adsorbed to the charged surface, screening effect stemming from the ions is therefore much weaker, leading to higher electrostatic potential at the charged surface.  This phenomenon indicates that our modified PNP model is able to capture the excluded volume effect of counterions. The result agrees well with other models having the excluded volume effect\cite{BAO_PRL97,Bazant_PRE07I,Bazant_PRE07II, ZhouWangLi_PRE11, LiLiuXuZhou_Nonliearity13, BZLu_JSP16}.  Also, the numerical result agrees with the analysis presented in section \ref{MPB} that the modified Debye screening length $\hat{\lambda}_D^W$ becomes longer due to volume exclusion. We also want to point out that the numerical solutions $c^n_{i,j}$ remain positive in all our simulations for large time, such as $T=5$, which is long after the system becomes steady.

\begin{figure}[htbp]
\centering
\includegraphics[scale=0.6]{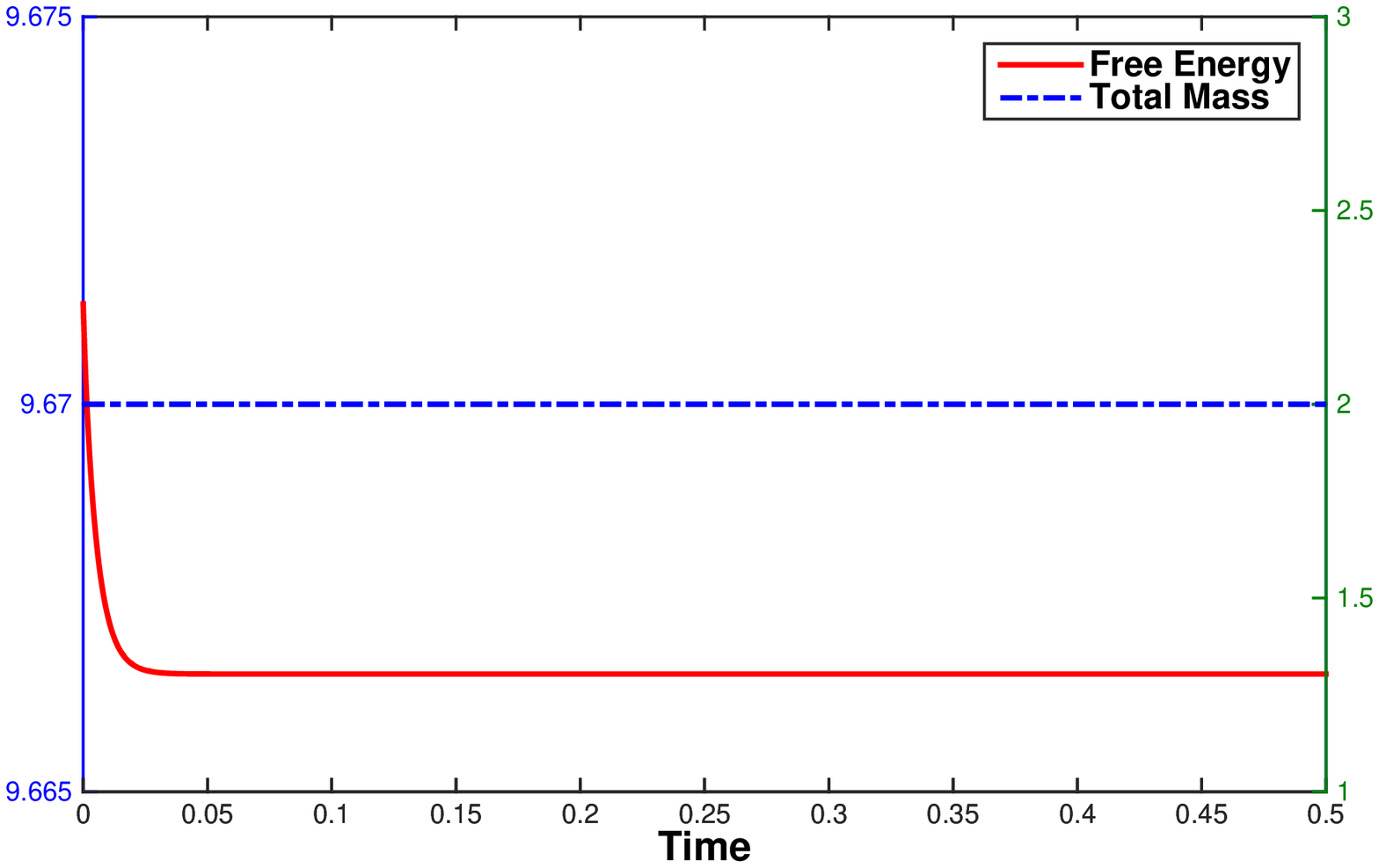}
\caption{Profiles of  the free energy (solid line)  and total concentration (dashed line) for the MPNP equations against time evolution.}  \label{f:EngMass}
\end{figure}

From Fig.~\ref{f:SteadyStates2D}, we have seen that the solutions of concentration are positive on $\Omega$. To test the property of mass conservation, we study the total concentration of the ions with respect to the time evolution. Fig.~\ref{f:EngMass} clearly shows that our numerical scheme perfectly conserves the total concentration. Also, we can see from Fig.~\ref{f:EngMass} that, as time evolves, the energy \reff{DiscretEnergy} decays monotonically and robustly.  Overall, such results confirm our numerical analysis presented in section \ref{s:Properties} on the properties of mass conservation and energy dissipation.

\begin{table}[htbp]
\centering
\begin{tabular}{|c|c|c|c|c|} \hline
h &      $l^{\infty}$ error in $c$ & Order & $l^{\infty}$ error in $\psi$    &Order      \\   \hline
 0.25&   0.0013665  &         --&   0.00017634   &     --    \\ \hline
          0.2&  0.00087098  &     2.0182&   0.00012124   &    1.6788    \\   \hline
          0.1 & 0.00020778   &    2.0676 & 3.3199e-005    &   1.8687    \\   \hline
         0.05 &4.1572e-005   &    2.3214 & 7.0764e-006    &   2.2300    \\   \hline
                          \end{tabular}
                            \caption{The $l^\infty$ error and convergence order for $c$ and $\psi$.}  \label{t:2d1errortable}
\end{table}

To test the accuracy of our  numerical  scheme, we solve the problem with various spatial step size $h$ and temporal step sizes $k$, with $k= \mathcal{O}(h^2)$. Table \ref{t:2d1errortable} lists the $l^{\infty}$ errors and their convergence orders.  A reference solution with a highly refined mesh is used to calculate the errors, since the exact solution is not available in this case.  In Table \ref{t:2d1errortable}, we observe that the $l^{\infty}$ error decreases as the mesh is refined. The convergence order is around $2$ for both the concentration and electrostatic potential, which implies that our numerical scheme has expected accuracy, i.e., second-order accurate in spatial discretization and first-order accurate in temporal discretization.

\subsection{Charge Dynamics}\label{ss:Dyn}
\begin{figure}[htbp]
\centering
\includegraphics[width=0.7\textwidth,height=0.28\textheight]{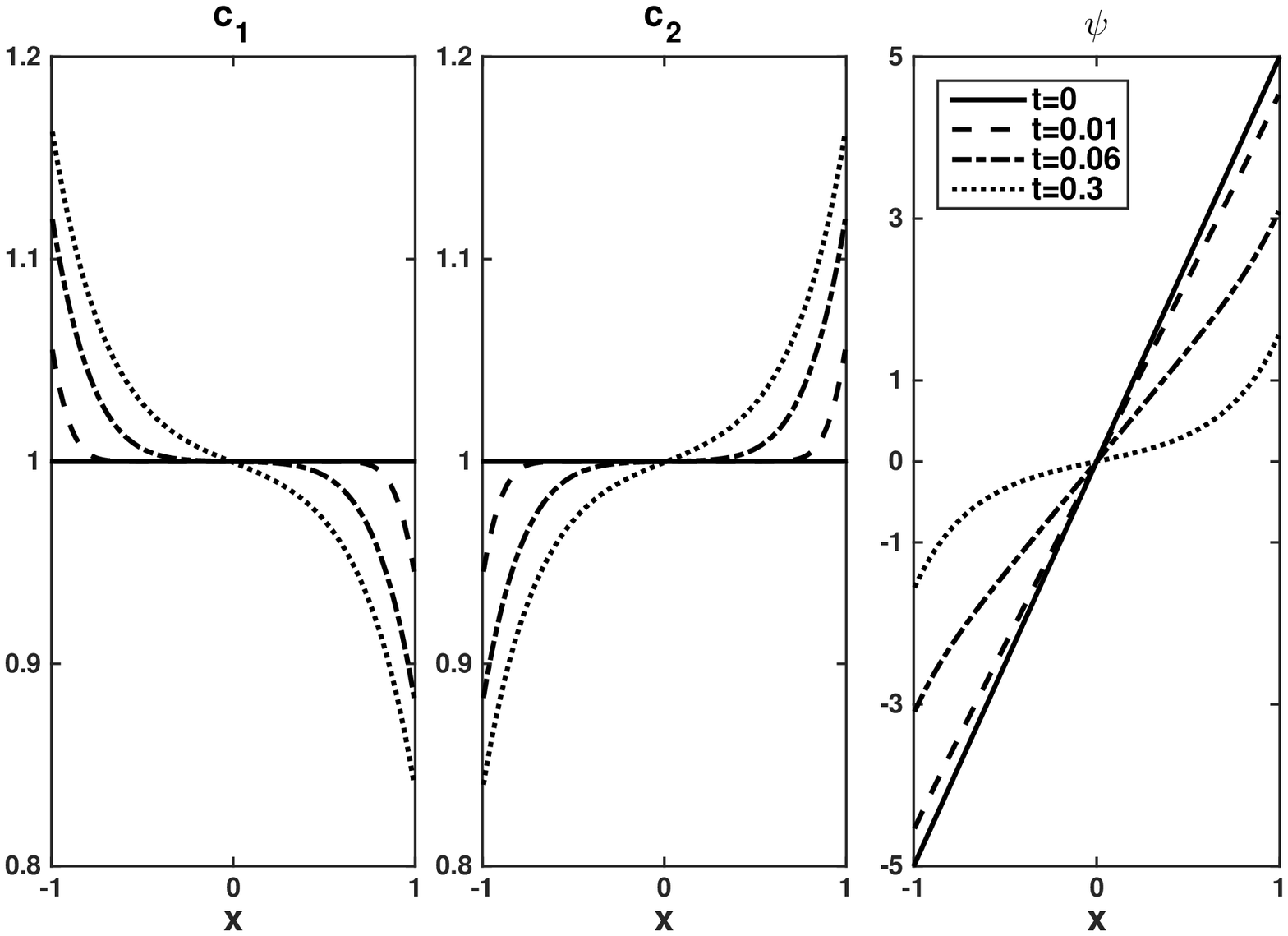}\\
\vspace{-3mm}\includegraphics[width=0.7\textwidth,height=0.28\textheight]{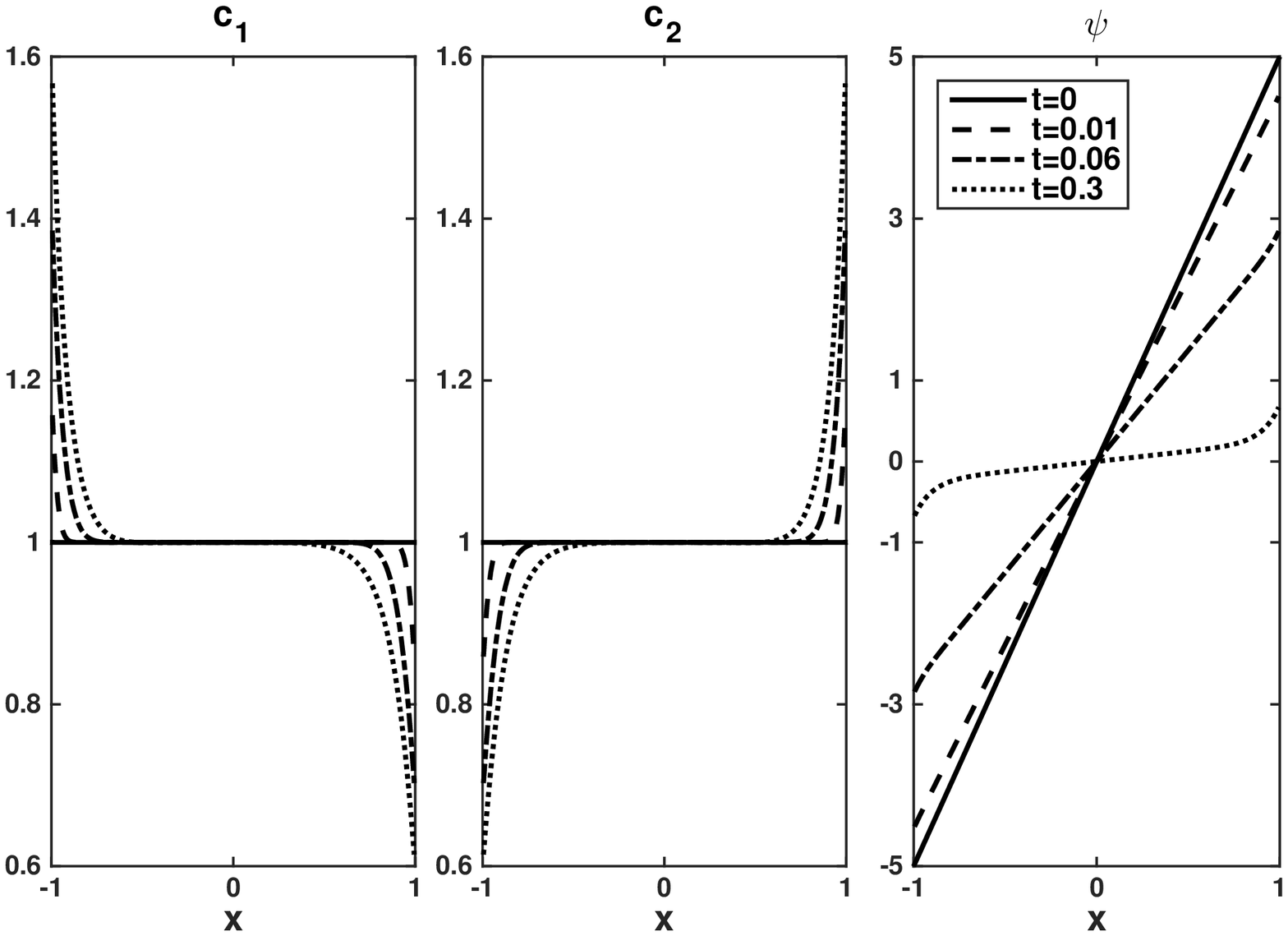}
\caption{Dynamics of concentrations and electrostatic potential for the MPNP (upper panel) and PNP (lower panel) equations.} \label{CPsiEvo}
\end{figure}

\begin{figure}[h]
 \centering
\includegraphics[width=0.7\textwidth,height=0.28\textheight]{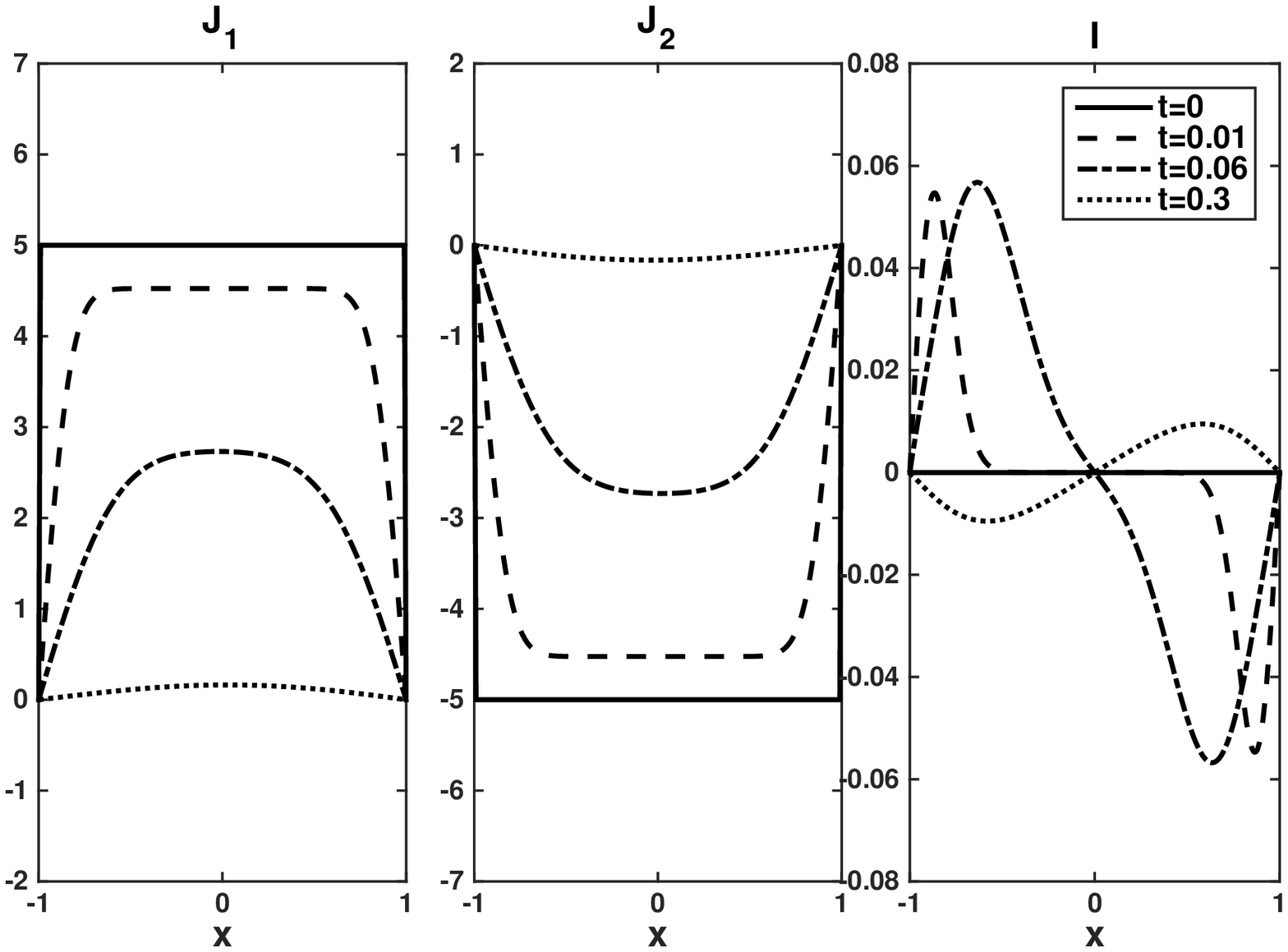}\\
\vspace{-3mm}\includegraphics[width=0.7\textwidth,height=0.28\textheight]{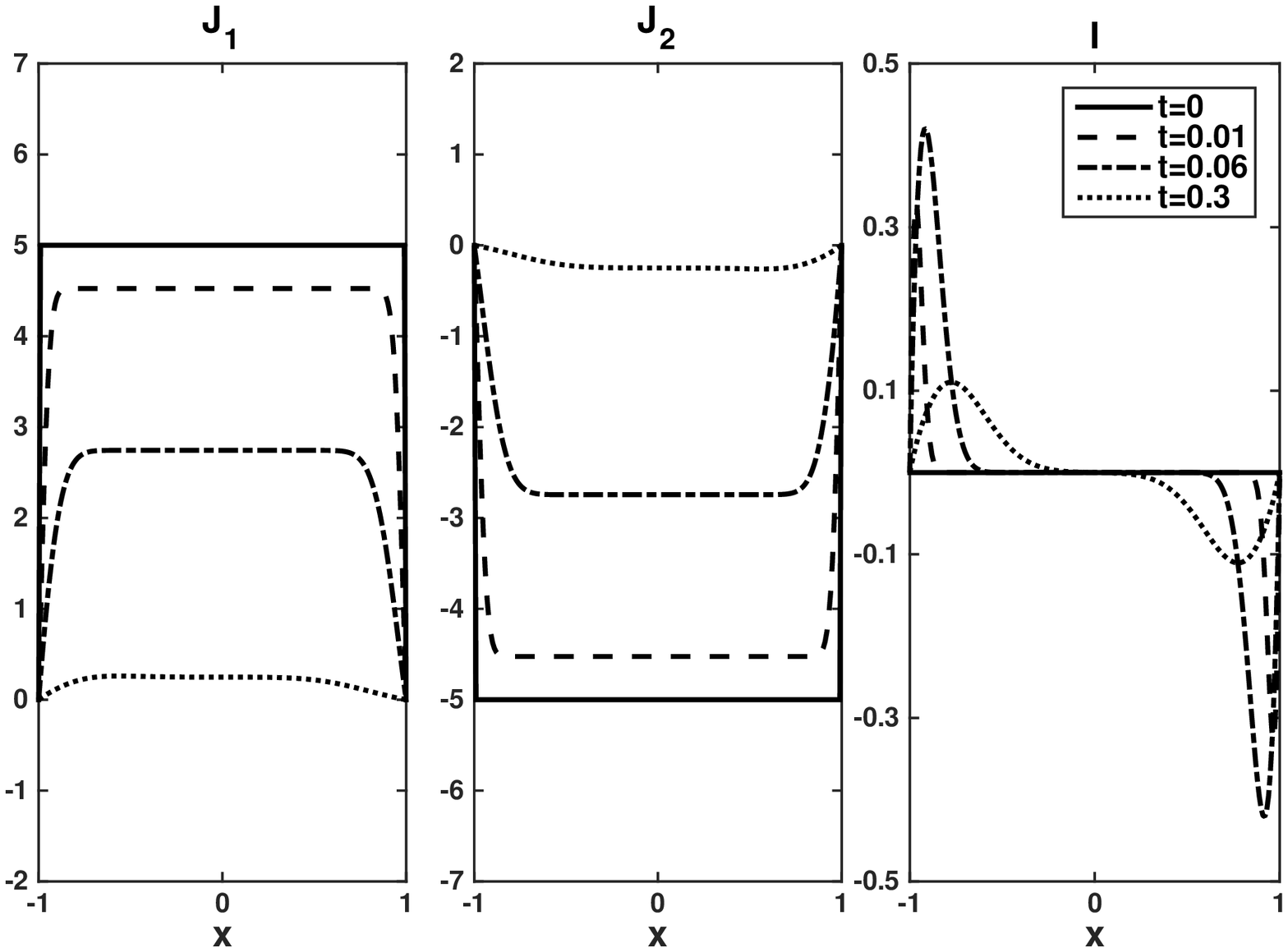}
\caption{Dynamics of flux of the MPNP (upper panel) and PNP (lower panel) equations. The flux for each species is defined by $J_l=-\left(\partial_x c^l+\alpha_l c^l \partial_x c^l+  q_l  c^l \partial_x \psi\right) $, and the sum of flux $I$ is given by $I=J_1+J_2$. } \label{FluxEvo}
\end{figure}
To study the charge dynamics of the MPNP equations, we consider a closed, neutral system that consists of two large parallel blocking surfaces with surface charges and two species of ions. We assume that the system is homogeneous in $y$ and $z$ directions. The equations \reff{SMPNPi} are reduced to one dimension. We set $M=2, q_1=-q_2=1, \alpha_1=\alpha_2=8, \epsilon=0.02, \gamma=0.1$, and initial and boundary conditions
\begin{align*}
&c_1(x, 0)=1, \quad c_2(x, 0)=1,  \quad - \epsilon \partial_x\psi(t,-1)=\sigma_a= - 0.1, \quad  \epsilon \partial_x \psi(t,1)=\sigma_b= 0.1, \\
&\left.\left( \partial_x c^l+\alpha_l c^l \partial_x c^l+  q_l  c^l \partial_x \psi \right)\right|_{x=-1,1}=0 \quad \mbox{for} \quad  l=1,2.
\end{align*}
We study the dynamics of concentrations and potential in an applied electric field induced by two charged surfaces. From Fig.~\ref{CPsiEvo}, we can see that the surface charges attract oppositely charged ions both for the MPNP and PNP equations, and that electrostatic potential at the surfaces decreases due to the screening effect from adsorbed counterions. Comparing with the results of the MPNP equations, the ionic concentrations at the vicinity of surfaces are much higher for the PNP equations, because counterions can accumulate at the charged surfaces without steric hindrance. Therefore, the electrostatic potential at the surfaces for the PNP is lower due to stronger screening effect.

It is of interest to study the excluded volume effect on the dynamics of flux for each ionic species.  As shown in Fig.~\ref{FluxEvo}, each species has large flux between charged surfaces and gradually relaxes to zero, reaching an equilibrium. In contrast to the results of the MPNP, the sum of flux, $I$, for the PNP has a larger magnitude due to its ignorance of excluded volume effect of ions. During the charge diffusion, the sum of flux for the MPNP in the middle region grows much faster than that of the PNP, indicating that the excluded volume effect speeds up the transport of ions through collisions between ions. Therefore, the system reaches an equilibrium in a smaller timescale if the steric effect is taken into account.

\subsection{Effect of $\alpha_l$}\label{ss:alpha}
\begin{figure}[hbt]
\centering
\includegraphics[width=0.8\textwidth]{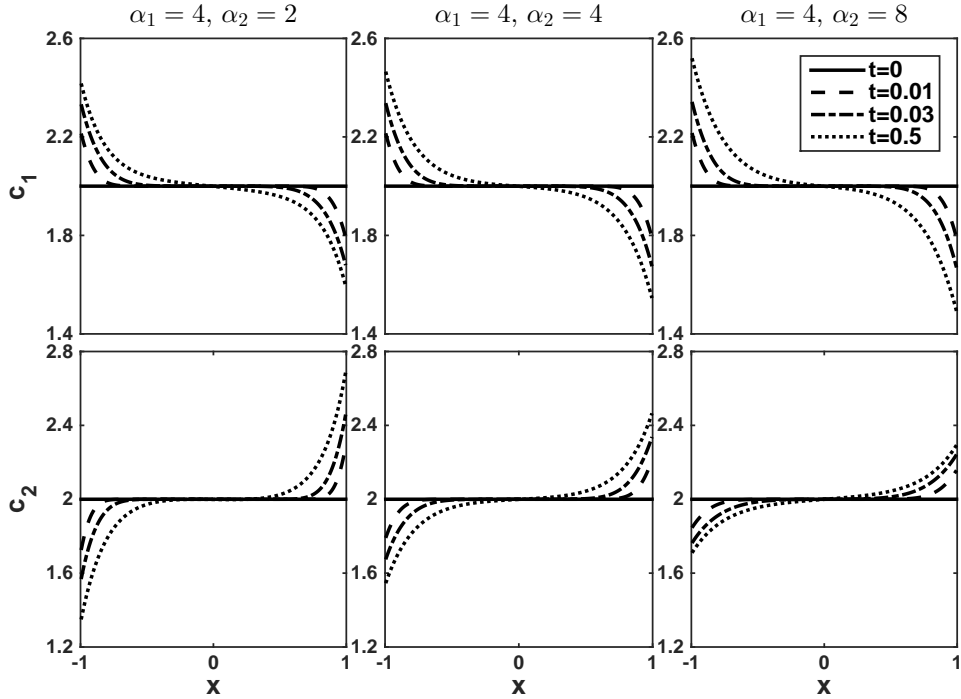}
\caption{Effect of $\alpha_1$ and $\alpha_2$ on the dynamics of concentrations $c_1$ (upper panel) and $c_2$ (lower panel).} \label{EffAlpha}
\end{figure}
As discussed above, the excluded volume effect that is reflected by $\alpha_i$ has a pronounced impact on the dynamics of charge diffusion. The value of $\alpha_i$ is related to the size of each species of ions. It can be understood as a fitting parameter, either in the modification of diffusion constants of ions\cite{Bruna_PRE12} or the sizes of ions in the Lennard-Jones potential\cite{LinBob_CMS14}.  We vary the value of $\alpha_i$ and investigate its effect on the dynamics and equilibrium concentration distributions.

In our simulations, we use the same setting as the previous section, except that $c_1(x, 0)= c_2(x, 0)=2$ and $\sigma_b=- \sigma_a= 0.2$. First we study the effect of $\alpha_2$ by testing different values of $\alpha_2$ ($2$, $4$, and $8$) with fixed $\alpha_1$. From Fig.~\ref{EffAlpha}, we observe that the dynamics of the concentration $c_2$ change significantly as $\alpha_2$ grows. Because of the steric hindrance, larger ionic sizes result in lower ionic concentration adsorbed to charge surfaces. In addition, ions with larger sizes reach an equilibrium much faster due to more frequent collisions between particles. With less accumulated counterions at surfaces, for instance $c_2$ at the right charged surface, electrostatic potential is less screened and therefore has stronger repulsion against coions (i.e., $c_1$). As such, we can see that the minor effect of $\alpha_2$ on $c_1$ is mainly through the variation of electrostatic potential. Direct interactions between $c_1$ and $c_2$ can be taken into account by including cross diffusion between different ionic species. We defer this investigation to our future work.

\subsection{Timescales in Charge Diffusion}\label{ss:ChgDyn}
\begin{figure}[hbt]
\centering
\includegraphics[width=0.8\textwidth]{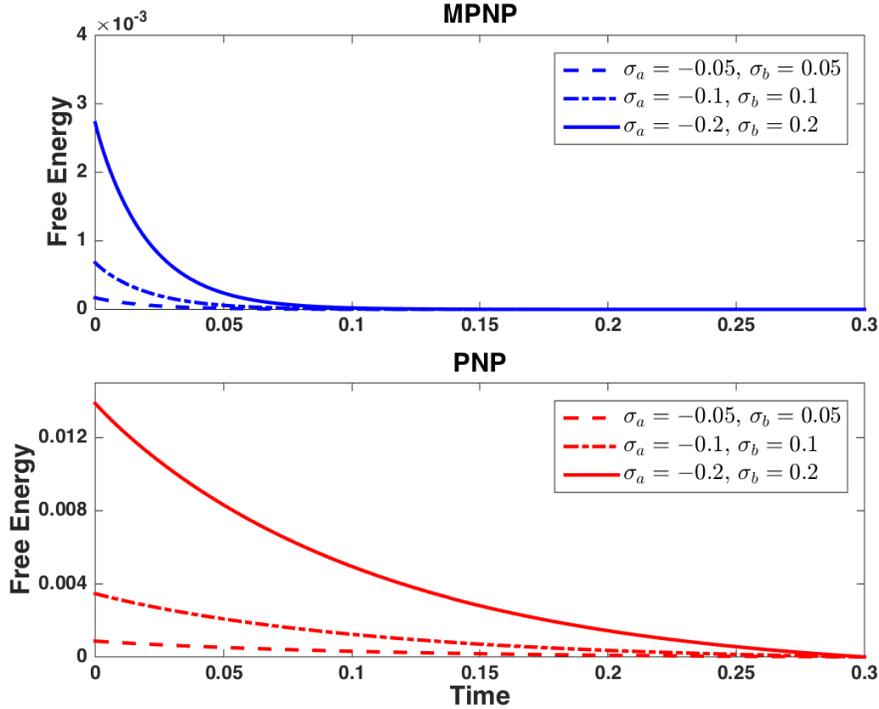}
\caption{Energy decay for the MPNP and PNP.} \label{energydecay}
\end{figure}
As revealed in previous two examples, there is a significant difference in timescales of relaxation dynamics whether the excluded volume effect is included or not. In this case, we probe the relaxation timescales in the charge diffusion through analyzing the free energy and total diffuse charges. We consider a system with the same setting as in section \ref{ss:Dyn} except $\epsilon=1, \gamma=1, c_1(x, 0)= c_2(x, 0)=1$, and $\alpha_1=\alpha_2=8$.  As expected, Fig.~\ref{energydecay} displays monotone energy profiles against time steps. For ease of reading, we shift each energy profile by the free energy of its final equilibrium state.  When larger surface charge is applied, the energy difference between the initial state and the equilibrium state is much higher, implying that more energy is stored in adsorbed counterions.  It is easy to notice that the energy for the MPNP relaxes quickly to a constant value for $T>0.1$; whereas, the energy for the PNP decreases gradually with a long tail. Such a discrepancy clearly demonstrates that the relaxation process for the MPNP is much faster than that of the PNP.  This can be explained by the fact that the excluded volume effect contributes to the diffusion of the ionic concentration through particle collisions and therefore promotes the energy relaxation of the whole system.

\begin{figure}[hbt]
\centering
\includegraphics[width=0.8\textwidth]{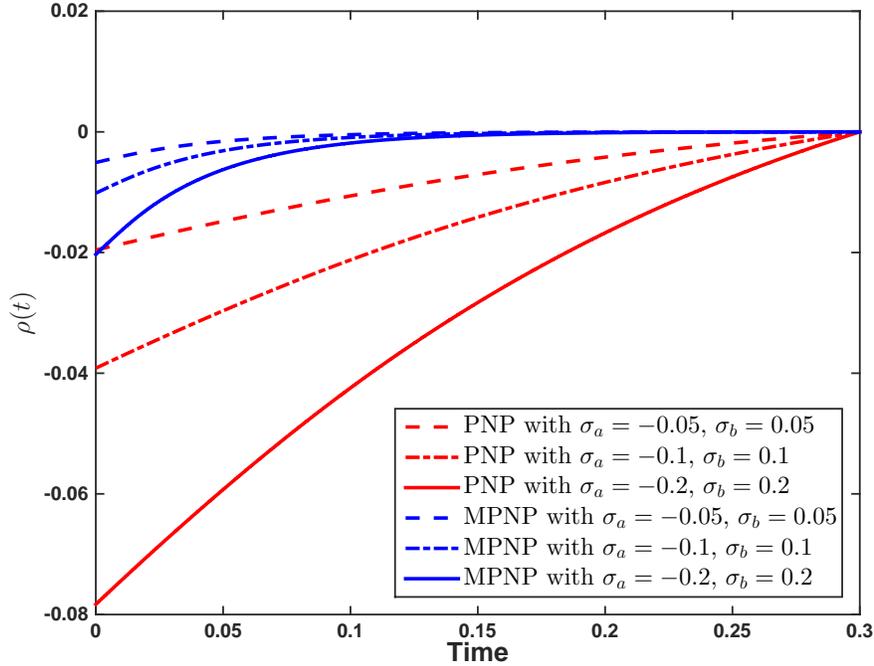}
\caption{ Total diffuse charges $\rho(t)$ for the modified PNP and  classical PNP.} \label{ex3rhot}
\end{figure}
 To further understand the timescales of charge diffusion, we also study the evolution of total diffuse charges in left half of the electrolytes \cite{BazantDiffChg_PRE04}:
\[
\rho(t)=\int_{-1}^0\sum_{l=1}^mq_lc^l (x) dx.
\]
From Fig.~\ref{ex3rhot}, we find that the total diffuse charges for the MPNP increases quickly and reaches a plateau. In contrast, the total diffuse charges in the PNP keeps growing over a relatively long period. As such, the timescale of the charge diffuse for the MPNP is much smaller.  Again, collisions between ions with excluded volume effect accounts for the smaller timescale exhibited in charge diffusion modeled by the MPNP.  Also, the MPNP successfully predicts many less diffuse charges, since the charges carried by ions are sterically hindered from adsorbing to the surface. All the results demonstrate that the MPNP theory has effectively captured the excluded volume effect of ions.

\section{Conclusions and Discussions}\label{s:Con}
In this work, we proposed a simple yet effective modified PNP (MPNP) model with the excluded volume effect. Our model used a linear concentration-dependent diffusion coefficient to incorporate the excluded volume effect of ions. With the help of the Lambert-W special function, we obtained the corresponding modified Poisson-Boltzmann (MPB)  equation for the steady state. A further analysis showed that there exists a unique weak solution to the MPB equation. In the limit of weak electrostatic potential, the MPB is approximated by a Debye-H\"uckel equation with a modified Debye screening length, which is longer than the classical one.  This prediction agrees with other modified PB models in the literature and is later confirmed by our numerical simulations. In the limit of strong electrostatic potential, the MPB is approximated by a linearized equation with a different modified Debye screening length that depends on parameters arising from volume exclusion.

In addition, we developed an accompanying conservative and energy dissipative finite difference method for the proposed MPNP model. Our analysis confirmed that the numerical scheme conserves total concentration and satisfies a corresponding discrete energy dissipation law. Positivity of numerical solutions was  proved for a system with single species in 1D. Numerical experiments were conducted to demonstrate that the scheme is of second-order accurate in spatial discretization and has expected properties. Extensive numerical simulations revealed that the excluded volume effect of ions has significant impacts on the dynamics of ionic concentration and flux. From the evolution of free energies and diffuse charges, we found that the excluded volume effect leads to a decrease of the timescales of charge diffusion through ionic collisions.

We now discuss several issues and possible further refinements of our work. In our current model, cross interactions between different species that arise from the excluded volume effect have not been taken into account. The model is effective when there is only one species in the environment, such as counterions adsorbing to charged surfaces. When multiple species of ions present, the cross interactions can be considered by including nonlinear cross-diffusion terms in the Nernst-Planck equations\cite{Bruna_JCP12, HyonLiuBob_CMS10, LinBob_CMS14}. It is interesting to explore the impact of cross interactions on the dynamics of ions. The corresponding numerical schemes that have properties of mass conservation, solution positivity, and energy dissipation will be one of our future studies as well.

As proved in Theorem \ref{thm:properties}, we can show that our numerical scheme respects ionic mass conservation and energy dissipation. In our numerical examples, we have numerically verified that the numerical solutions of concentrations keeps being positive in long time simulations. Unfortunately, we are not able to rigorously prove the positivity of numerical solutions of concentrations except for the 1D case, see Appendix \ref{s:App}. The main difficulty lies in the establishment of $L^\infty$ bounds for the numerical solutions of electrostatic potential and concentrations.  One possible improvement is to design a novel discretization scheme for the Nernst-Planck equations, so that the positivity of numerical solutions of concentrations can be guaranteed.

Finally, it is of great interest to develop implicit schemes for the MPNP model. In our current implementation, the Nernst-Planck equations are discretized  explicitly and the Poisson's equation is solved in each time step. The discretization time step has to be small for stability reasons. This treatment becomes computationally inefficient for 3D cases. In future, we will focus on the development of implicit schemes that have mass conservation, positivity of numerical solutions, and energy dissipation.

}
\vspace{0.5cm}
\noindent{\bf Acknowledgments.}
S. Zhou acknowledges the supports from Soochow University through a start-up Grant (Q410700415), National Natural Science Foundation of China (NSFC 11601361), and Natural Science Foundation of Jiangsu Province (BK20160302).

\appendix
\numberwithin{equation}{section}
\section{Appendix: Positivity in 1D}\label{s:App}
We investigate the positivity of concentration $c$ in single species system varying only in one direction. This kind of system applies to many situations, e.g., two parallel blocking plates with charged surfaces shown in Example 2.
\begin{theorem}\label{thm:positivity} Assume the system \eqref{SMPNPi} with single species, $M=1$, is varying only in $y-$direction, i.e., $c(i,j)=c_j$ and $\psi(i,j)=\psi_j$.
 The discrete concentration $c_j^n$ remains  positive in time:  if  $c_j^n > 0$,  then
$$
c_j^{n+1} > 0$$
provided the condition $k < h^2 \lambda_0/\gamma $ where
\begin{equation}\label{la}
\lambda_0= \frac{e^{\frac {\alpha (\sigma_a+\sigma_b)}{hq^2}}}{ e^{\frac{-hq^2\sigma_b}{2}} + e^{\frac{-hq^2\sigma_a}{2}}}.
\end{equation}

\end{theorem}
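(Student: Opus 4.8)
The plan is to read the explicit update \eqref{cq} as a three-point scheme and to verify that it is a monotone, positivity-preserving update under a CFL-type restriction on $k$. First I would write the flux $Q_j$ of \eqref{tc} in the one-dimensional single-species form, substitute $g_j=c_je^{q\psi_j+\alpha c_j}$ into the numerical flux, and factor the exponential weights so that
\[
Q_j=\frac{\gamma}{h^2}\Big[\,c_{j+1}a_{j+1/2}+c_{j-1}a_{j-1/2}^{-1}-c_j\big(a_{j-1/2}+a_{j+1/2}^{-1}\big)\Big],
\]
where $a_{j+1/2}=\exp\big(\tfrac{q}{2}(\psi_{j+1}-\psi_j)+\tfrac{\alpha}{2}(c_{j+1}-c_j)\big)$. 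Inserting this into $c_j^{n+1}=c_j^n+kQ_j(c^n,\psi^n)$ exhibits $c_j^{n+1}$ as a linear combination of $c_{j-1}^n,c_j^n,c_{j+1}^n$ in which the two off-diagonal coefficients $\tfrac{k\gamma}{h^2}a_{j+1/2}$ and $\tfrac{k\gamma}{h^2}a_{j-1/2}^{-1}$ are automatically strictly positive.

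Positivity then reduces to keeping the diagonal coefficient $1-\tfrac{k\gamma}{h^2}\big(a_{j-1/2}+a_{j+1/2}^{-1}\big)$ nonnegative. At the endpoints the zero-flux conditions $\widehat g_{1/2}=\widehat g_{N+1/2}=0$ delete the out-of-domain fluxes, so $c_1^{n+1}$ and $c_N^{n+1}$ each involve a single interior neighbor with one exponential weight and are handled identically with only one term present. Since every $c^n_j>0$ and at least one off-diagonal coefficient is strictly positive at each node, once the diagonal coefficient is $\ge 0$ we obtain $c_j^{n+1}>0$ with strict inequality. It therefore remains to produce a uniform bound $a_{j-1/2}+a_{j+1/2}^{-1}\le\lambda_0^{-1}$ valid at every node and step, so that $k<h^2\lambda_0/\gamma$ is exactly the required restriction.

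The decisive step is to bound the weights $a_{j\pm1/2}$ uniformly in $j$ and $n$, that is, to control the discrete gradients $\tfrac{q}{2}(\psi_{j+1}-\psi_j)$ and $\tfrac{\alpha}{2}(c_{j+1}-c_j)$. For the electrostatic part I would sum the discrete Poisson equation \eqref{pc} telescopically and insert the Neumann data $\psi_1-\psi_0=-\sigma_a h/\epsilon$ and $\psi_{N+1}-\psi_N=\sigma_b h/\epsilon$, so that each potential jump is expressed through partial charge sums and produces the individual factors $e^{-hq^2\sigma_a/2}$ and $e^{-hq^2\sigma_b/2}$ in \eqref{la}. For the concentration part I would use that each $c_j^n>0$ and that, by mass conservation (Theorem \ref{thm:properties}) together with the neutrality condition, $\sum_j c_j^n$ is fixed and determined by the surface charges through $\sigma_a+\sigma_b$; hence every $c_j^n$ is bounded pointwise by this constant, which controls $e^{\alpha(c_{j+1}-c_j)/2}$ and accounts for the factor $e^{\alpha(\sigma_a+\sigma_b)/(hq^2)}$. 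Collecting the estimates gives $a_{j-1/2}+a_{j+1/2}^{-1}\le\lambda_0^{-1}$ with $\lambda_0$ as in \eqref{la}. I expect the main obstacle to be precisely this uniform control of the weights, especially the concentration term: it is the $L^\infty$-type estimate the authors flag as unavailable beyond one dimension in Remark 2.1, it relies essentially on the sign and monotone structure of a single counterion species in one dimension, and tracking the bounds cleanly through the exponentials to recover the sharp constant \eqref{la} is where the real care is needed.
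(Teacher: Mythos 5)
Your reformulation of the update \eqref{cq} as a three\-/point scheme is exactly the paper's starting point: your identity $Q_j=\tfrac{\gamma}{h^2}\bigl[c_{j+1}a_{j+1/2}+c_{j-1}a_{j-1/2}^{-1}-c_j\bigl(a_{j-1/2}+a_{j+1/2}^{-1}\bigr)\bigr]$ is precisely the paper's expansion \eqref{cnplus1}, the observation that the off-diagonal weights are automatically positive is the same, and your telescoping of the discrete Poisson equation with the Neumann data to bound the potential jumps is the paper's argument that $A_j^n=q\psi^n_{j+1}-q\psi^n_j$ satisfies $A_j^n-A_{j-1}^n=-q^2c_j^nh^2/\epsilon<0$, hence is monotone and trapped between its boundary values \eqref{Abound}. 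So the skeleton of your argument coincides with the paper's proof.

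The divergence---and the gap---is in how you control the concentration increments. The paper uses the discrete Poisson equation a \emph{second} time: differencing it gives $h^2q^2(c_j^n-c_{j-1}^n)=-(A_j^n-2A_{j-1}^n+A_{j-2}^n)$, see \eqref{cinAs}, so each concentration jump is a second difference of $A^n$, and the already-established two-sided bound on $A^n$ yields $\tfrac{\alpha}{2}\bigl(c_j^n-c_{j-1}^n\bigr)\le \alpha\,|\sigma_a+\sigma_b|/(hq^2)$, which is exactly the exponent in \eqref{la}. You instead invoke mass conservation plus neutrality, $|q|\,h\sum_jc_j^n=|\sigma_a+\sigma_b|$, and the induction hypothesis $c_j^n>0$ to get the pointwise bound $c_j^n\le|\sigma_a+\sigma_b|/(|q|h)$, hence $\tfrac{\alpha}{2}\bigl|c_{j+1}^n-c_j^n\bigr|\le \alpha\,|\sigma_a+\sigma_b|/(2|q|h)$. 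That estimate is legitimate, but it produces the exponent $\alpha|\sigma_a+\sigma_b|/(2|q|h)$, not $\alpha|\sigma_a+\sigma_b|/(hq^2)$; the two agree only when $|q|=2$, so your claim that collecting the estimates yields $\lambda_0$ exactly as in \eqref{la} does not hold. For $|q|<2$ (e.g.\ the monovalent case $q=1$ used in the paper's simulations) your exponent is smaller, your admissible mesh-ratio threshold is larger than $\lambda_0$, and the theorem as stated follows a fortiori; but for $|q|>2$ your route gives a strictly more restrictive CFL condition and does not establish positivity under the stated condition $k<h^2\lambda_0/\gamma$. To recover the paper's constant you need its key trick---reading concentration differences off the Poisson equation as second differences of the potential increments---rather than an $L^\infty$ bound from conservation. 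Relatedly, your framing of this step as the same $L^\infty$ obstruction flagged in Remark 2.1 is slightly off target: in 1D the paper avoids any $L^\infty$ estimate on $c$ precisely by this second-difference identity, and that is what breaks down in higher dimensions.
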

\begin{proof}
 Define $A^n_j=q\psi_{j+1}^n-q\psi_j^n$,  the boundary condition gives $A_0^n=-qh\sigma_a/\epsilon$ and $A_N^n=qh\sigma_b/\epsilon$. Let mesh ratio be denoted by $\lambda=k\gamma/h^2$,  we can rewrite (\ref{cq}) as

\begin{align}
c_j^{n+1}
=&c^n_j\left( 1- \lambda \left( e^{\frac{-q(\psi_{j+1}^{n}-\psi_{j}^{n})}{2}} e^{ \frac{-\alpha (c_{j+1}^{n}-c_{j}^{n})}{2}} + e^{\frac{q(\psi_{j}^{n}-\psi_{j-1}^{n})}{2}} e^{ \frac{\alpha (c_{j}^{n}-c_{j-1}^{n})}{2}} \right) \right)\notag\\
&+\lambda c_{j+1}^n e^{\frac{q(\psi_{j+1}^{n}-\psi_{j}^{n})}{2}} e^{ \frac{\alpha (c_{j+1}^{n}-c_{j}^{n})}{2}} +
\lambda  c_{j-1}^n e^{\frac{q(-\psi_{j}^{n}-\psi_{j-1}^{n})}{2}} e^{ \frac{-\alpha (c_{j}^{n}-c_{j-1}^{n})}{2}}  .
\label{cnplus1}\end{align}
      As in \cite{LW14}, the discrete Poisson equation implies
\begin{align}
          A_j^n-A^n_{j-1}&=-q^2c_j^nh^2/\epsilon. \label{Abound}
\end{align}
This indicates $A_j^n$ is monotonic. Along with boundary conditions we have
 $A_j^n$ is bounded, i.e.,
$$
h\sigma_b \leq A_j^n \leq -h\sigma_a.
$$
Furthermore,  the discrete Poisson equation implies
\begin{align}
 h^2 q^2(c_{j}^{n}-c_{j-1}^{n})&=-(A_{j}^{n}-2A_{j-1}^{n}+A_{j-2}^{n}).\label{cinAs}
\end{align}
Combining \eqref{Abound} and \eqref{cinAs}, we have the following for \eqref{cnplus1}



\begin{align}
e^{ \frac{ q(\psi_{j}^{n}-\psi_{j-1}^{n})}{2}}e^{ \frac{\alpha (c_{j}^{n}-c_{j-1}^{n})}{2}}+e^{ -\frac{ q(\psi_{j+1}^{n}-\psi_{j}^{n})}{2}}e^{ \frac{-\alpha (c_{j+1}^{n}-c_{j}^{n})}{2}}&\leq e^{ \frac{hq^2\sigma_a}{2}-\frac{\alpha (\sigma_a+\sigma_b)}{hq^2}}+e^{ -\frac{hq^2\sigma_b}{2}-\frac{\alpha (\sigma_a+\sigma_b)}{hq^2}}\notag\\&=e^{-\frac{\alpha (\sigma_a+\sigma_b)}{hq^2}}(e^{ -\frac{hq^2\sigma_a}{2}}+e^{ -\frac{hq^2\sigma_b}{2}})
\end{align}
Thus,  we have $ c_j^{n+1}> 0$ if  $\lambda \leq \lambda_0$ as  defined in (\ref{la}).
 \end{proof}

\bibliographystyle{plain}
\bibliography{PNP}

\begin{thebibliography}{10}

\bibitem{LiBook_ImperialCollege}
P.~Balbuena and Y.~Wang.
\newblock {\em Lithium Ion Batteries: Solid Electrolyte Interphase}.
\newblock Imperial College Press, London, 2004.

\bibitem{BazantDiffChg_PRE04}
M.~Bazant, K.~Thornton, and A.~Ajdari.
\newblock Diffuse-charge dynamics in electrochemical systems.
\newblock {\em Phys. Rev. E}, 70:021506, 2004.

\bibitem{BAO_PRL97}
I.~Borukhov, D.~Andelman, and H.~Orland.
\newblock Steric effects in electrolytes: {A} modified {P}oisson--{B}oltzmann
  equation.
\newblock {\em Phys. Rev. Lett.}, 79:435--438, 1997.

\bibitem{BAO_electroacta00}
I.~Borukhov, D.~Andelman, and H.~Orland.
\newblock Adsorption of large ions from an electrolyte solution: {A} modified
  {P}oisson--{B}oltzmann equation.
\newblock {\em Electrochimica Acta}, 46:221--229, 2000.

\bibitem{Bruna_JCP12}
M.~Bruna and S.~Chapman.
\newblock Diffusion of multiple species with excluded-volume effects.
\newblock {\em J .Chem. Phys.}, 137:204116, 2012.

\bibitem{Bruna_PRE12}
M.~Bruna and S.~Chapman.
\newblock Excluded-volume effects in the diffusion of hard spheres.
\newblock {\em Phys. Rev. E}, 85:011103, 2012.

\bibitem{Burger_Nonlinearity12}
M.~Burger, B.~Schlake, and M.~Wolfram.
\newblock Nonlinear {P}oisson-{N}ernst-{P}lanck equations for ion flux through
  confined geometries.
\newblock {\em Nonlinearity}, 25:961--990, 2012.

\bibitem{Chaudhry_Cicp14}
J.~Chaudhry, J.~Comer, A.~Aksimentiev, and L.~Olson.
\newblock A stabilized finite element method for modified
  {P}oisson-{N}ernst-{P}lanck equations to determine ion flow through a
  nanopore.
\newblock {\em Commun. Comput. Phys.}, 15:93--125, 2014.

\bibitem{DChen_BMB16}
D.~Chen.
\newblock A new {P}oisson-{N}ernst-{P}lanck model with ion–water interactions
  for charge transport in ion channels.
\newblock {\em Bull. Math. Biol.}, 78:1703--1726, 2016.

\bibitem{ChuEtal_BiophysJ07}
V.~Chu, Y.~Bai, J.~Lipfert, D.~Herschlag, and S.~Doniach.
\newblock Evaluation of ion binding to {DNA} duplexes using a size-modified
  {P}oisson--{B}oltzmann theory.
\newblock {\em Biophys. J}, 93:3202--3209, 2007.

\bibitem{Lambert_ACM96}
R.~Corless, G.~Gonnet, D.~Hare, D.~Jeffrey, and D.~Knuth.
\newblock On the {L}ambert {W} function.
\newblock {\em Adv. Comput. Math.}, 5:329--359, 1996.

\bibitem{Lyklema_JCIS07}
A.~Delgado, F.~González-Caballero, R.~Hunter, L.~Koopal, and J.~Lyklema.
\newblock Measurement and interpretation of electrokinetic phenomena.
\newblock {\em J. Colloid Interface Sci.}, 309:194--224, 2007.

\bibitem{BobHyonLiu_JCP10}
B.~Eisenberg, Y.~Hyon, and C.~Liu.
\newblock Energy variational analysis {EnVarA} of ions in water and channels:
  Field theory for primitive models of complex ionic fluids.
\newblock {\em J. Chem. Phys.}, 133:104104, 2010.

\bibitem{XFLi_JCE17}
A.~Flavell, J.~Kabre, and X.~Li.
\newblock An energy-preserving discretization for the
  {P}oisson-{N}ernst-{P}lanck equations.
\newblock {\em J. Comput. Electron.}, 16:431--441, 2017.

\bibitem{XFLi_JCE14}
A.~Flavell, M.~Machen, R.~Eisenberg, J.~Kabre, C.~Liu, and X.~Li.
\newblock A conservative finite difference scheme for
  {P}oisson-{N}ernst-{P}lanck equations.
\newblock {\em J. Comput. Electron.}, 13:235--249, 2014.

\bibitem{Gavish_arXiv17}
N.~Gavish.
\newblock {P}oisson-{N}ernst-{P}lanck equations with steric effects -
  non-convexity and multiple stationary solutions.
\newblock {\em arXiv preprint}, 1:07164, 2017.

\bibitem{GilbargTrudinger98}
D.~Gilbarg and N.~S. Trudinger.
\newblock {\em Elliptic Partial Differential Equations of Second Order}.
\newblock Springer-Verlag, 2nd edition, 1998.

\bibitem{HePan_AMC16}
D.~He and K.~Pan.
\newblock An energy preserving finite difference scheme for the
  {P}oisson-{N}ernst-{P}lanck system.
\newblock {\em Appl. Math. Comput.}, 287:214--223, 2016.

\bibitem{Hille_Book2001}
B.~Hille.
\newblock {\em Ion Channels of Excitable Membranes}.
\newblock Sinauer Associates, 3rd edition, 2001.

\bibitem{HyonLiuBob_CMS10}
Y.~Hyon, B.~Eisenberg, and C.~Liu.
\newblock A mathematical model for the hard sphere repulsion in ionic
  solutions.
\newblock {\em Commun. Math. Sci.}, 9:459--475, 2010.

\bibitem{HyonLiuBob_JPCB12}
Y.~Hyon, C.~Liu, and B.~Eisenberg.
\newblock {PNP} equations with steric effects: a model of ion flow through
  channels.
\newblock {\em J. Phys. Chem. B}, 116:11422--11441, 2012.

\bibitem{LiuTu_JDDE12}
S.~Ji and W.~Liu.
\newblock {P}oisson-{N}ernst-{P}lanck systems for ion flow with density
  functional theory for hard-sphere potential: {I}-{V} relations and critical
  potentials. part {I}: Analysis.
\newblock {\em J. Dynam. Differ. Equ.}, 24:955--983, 2012.

\bibitem{JiaLIuZhang_DCDSB16}
Y.~Jia, W.~Liu, and M.~Zhang.
\newblock Qualitative properties of ionic flows via {P}oisson-{N}ernst-{P}lanck
  systems with {B}ikerman's local hard-sphere potential: Ion size effects.
\newblock {\em Discrete Contin. Dyn. Syst.}, 21:1775--1802, 2016.

\bibitem{JZWu_JPCM14}
J.~Jiang, D.~Cao, D.~Jiang, and J.~Wu.
\newblock Time-dependent density funcational theory for ion diffusion in
  electrochemical systems.
\newblock {\em J. Phys.: Condens. Matter.}, 26:284102, 2014.

\bibitem{BazantSteric_PRE07}
M.~Kilic, M.~Bazant, and A.~Ajdari.
\newblock Steric effects in the dynamics of electrolytes at large applied
  voltages. ii. modified {P}oisson-{N}ernst-{P}lanck equations.
\newblock {\em Phys. Rev. E}, 75:021503, 2007.

\bibitem{Bazant_PRE07I}
M.~S. Kilic, M.~Bazant, and A.~Ajdari.
\newblock Steric effects in the dynamics of electrolytes at large applied
  voltages. {I.} {D}ouble-layer charging.
\newblock {\em Phys. Rev. E}, 75:021502, 2007.

\bibitem{Bazant_PRE07II}
M.~S. Kilic, M.~Bazant, and A.~Ajdari.
\newblock Steric effects in the dynamics of electrolytes at large applied
  voltages. {II.} {M}odified {P}oisson--{N}ernst--{P}lanck equations.
\newblock {\em Phys. Rev. E}, 75:021503, 2007.

\bibitem{Li_Nonlinearity09}
B.~Li.
\newblock Continuum electrostatics for ionic solutions with nonuniform ionic
  sizes.
\newblock {\em Nonlinearity}, 22:811--833, 2009.

\bibitem{Li_SIMA09}
B.~Li.
\newblock Minimization of electrostatic free energy and the
  {P}oisson--{B}oltzmann equation for molecular solvation with implicit
  solvent.
\newblock {\em SIAM J. Math. Anal.}, 40:2536--2566, 2009.

\bibitem{Li_SIMA11}
B.~Li.
\newblock Erratum: Minimization of electrostatic free energy and the
  {P}oisson--{B}oltzmann equation for molecular solvation with implicit
  solvent.
\newblock {\em SIAM J. Math. Anal.}, 43:2776--2777, 2011.

\bibitem{LiLiuXuZhou_Nonliearity13}
B.~Li, P.~Liu, Z.~Xu, and S.~Zhou.
\newblock Ionic size effects: generalized boltzmann distributions, counterion
  stratification and modified debye length.
\newblock {\em Nonlinearity}, 26(10):2899, 2013.

\bibitem{LiWenZhou_CMS16}
B.~Li, J.~Wen, and S.~Zhou.
\newblock Mean-field theory and computation of electrostatics with ionic
  concentration dependent dielectrics.
\newblock {\em Commun. Math. Sci.}, 14:249--271, 2016.

\bibitem{DLi_Book04}
D.~Li.
\newblock {\em Electrokinetics in Microfluidics}, volume~2.
\newblock Academic Press, 2004.

\bibitem{LinLiuZhang_SIADS13}
G.~Lin, W.~Liu, Y.~Yi, and M.~Zhang.
\newblock {P}oisson-{N}ernst-{P}lanck systems for ion flow with density
  functional theory for local hard-sphere potential.
\newblock {\em SIAM J. Appl. Dyn. Syst.}, 12:1613--1648, 2013.

\bibitem{LinBob_CMS14}
T.~Lin and B.~Eisenberg.
\newblock A new approach to the {Lennard-Jones} potential and a new model:
  {PNP}-steric equations.
\newblock {\em Commun. Math. Sci.}, 12:149--173, 2014.

\bibitem{LinBob_Nonlinearity15}
T.~Lin and B.~Eisenberg.
\newblock Multiple solutions of steady-state {P}oisson-{N}ernst-{P}lanck
  equations with steric effects.
\newblock {\em Nonlinearity}, 28:2053--2080, 2015.

\bibitem{LW14}
H.~Liu and Z.~Wang.
\newblock A free energy satisfying finite difference method for
  {P}oisson-{N}ernst-{P}lanck equations.
\newblock {\em J. Comput. Phys.}, 268:363--376, 2014.

\bibitem{LiuWangDG_JCP17}
H.~Liu and Z.~Wang.
\newblock A free energy satisfying discontinuous galerkin method for
  one-dimensional {P}oisson-{N}ernst-{P}lanck systems.
\newblock {\em J. Comput. Phys.}, 328:413--437, 2017.

\bibitem{LiuJiXu_SIAP17}
P.~Liu, X.~Ji, and Z.Xu.
\newblock Modified {P}oisson-{N}ernst-{P}lanck model with accurate coulomb
  correlation in variable media.
\newblock {\em Submitted}, 2016.

\bibitem{LiuMaXu_Cicp17}
P.~Liu, M.~Ma, and Z.~Xu.
\newblock Understanding depletion induced like-charge attraction from
  self-consistent field model.
\newblock {\em Commun. Comp. Phys.}, 22:95--111, 2017.

\bibitem{LiuTuZhang_JDDE12}
W.~Liu, X.~Tu, and M.~Zhang.
\newblock {P}oisson-{N}ernst-{P}lanck systems for ion flow with density
  functional theory for hard-sphere potential: {I}-{V} relations and critical
  potentials. part {II}: Numerics.
\newblock {\em J. Dynam. Differ. Equ.}, 24:985--1004, 2012.

\bibitem{EVE_JCIS08}
J.~López-García, M.~Aranda-Rascón, and J.~Horno.
\newblock Excluded volume effect on the electrophoretic mobility of colloidal
  particles.
\newblock {\em J. Colloid Interface Sci.}, 323:146--152, 2008.

\bibitem{BZLu_BiophyJ11}
B.~Lu and Y.~Zhou.
\newblock {P}oisson-{N}ernst-{P}lanck equations for simulating biomolecular
  diffusion-reaction processes {II}: Size effects on ionic distributions and
  diffusion-reaction rates.
\newblock {\em Biophys. J.}, 100:2475--2485, 2011.

\bibitem{MaXu_JCP14}
M.~Ma and Z.~Xu.
\newblock Self-consistent field model for strong electrostatic correlations and
  inhomogeneous dielectric media.
\newblock {\em J. Chem. Phys.}, 141:244903, 2014.

\bibitem{GLin_Cicp14}
D.~Meng, B.~Zheng, G.~Lin, and M.~Sushko.
\newblock Numerical solution of 3d {P}oisson-{N}ernst-{P}lanck equations
  coupled with classical density functional theory for modeling ion and
  electron transport in confined environment.
\newblock {\em Commun. Comput. Phys.}, 16:1298--1322, 2014.

\bibitem{XuLiu_JCP16}
M.~Metti, J.~Xu, and C.~Liu.
\newblock Energetically stable discretizations for charge transport and
  electrokinetic models.
\newblock {\em J. Comput. Phys.}, 306:1--18, 2016.

\bibitem{Gibou_JCP14}
M.~Mirzadeh and F.~Gibou.
\newblock A conservative discretization of the {P}oisson-{N}ernst-{P}lanck
  equations on adaptive cartesian grids.
\newblock {\em J. Comput. Phys.}, 274:633--653, 2014.

\bibitem{BZLu_JSP16}
Y.~Qiao, X.~Liu, M.~Chen, and B.~Lu.
\newblock A local approximation of fundamental measure theory incorporated into
  three dimensional {Poisson--Nernst--Planck} equations to account for hard
  sphere repulsion among ions.
\newblock {\em J. Stat. Phys.}, 163:156--174, 2016.

\bibitem{BZLu_JCP14}
Y.~Qiao, B.~Tu, and B.~Lu.
\newblock Ionic size effects to molecular solvation energy and to ion current
  across a channel resulted from the nonuniform size-modified {PNP} equations.
\newblock {\em J. Chem. Phys.}, 140:174102, 2014.

\bibitem{Schoch_RMP08}
R.~Schoch, J.~Han, and P.~Renaud.
\newblock Transport phenomena in nanofluidics.
\newblock {\em Rev. Mod. Phys.}, 80:839--883, 2008.

\bibitem{BenLu_CPC15}
B.~Tu, Y.~Xie, L.~Zhang, and Benzhuo Lu.
\newblock Stabilized finite element methods to simulate the conductances of ion
  channels.
\newblock {\em Comput. Commun. Phys.}, 188:131--139, 2015.

\bibitem{XuShengLiu_CMS14}
S.~Xu, P.~Sheng, and C.~Liu.
\newblock An energetic variational approach for ion transport.
\newblock {\em Commun. Math. Sci.}, 12:779--789, 2014.

\bibitem{XuMaLiu_PRE14}
Z.~Xu, M.~Ma, and P.~Liu.
\newblock Self-energy-modified {P}oisson-{N}ernst-{P}lanck equations: {WKB}
  approximation and finite-difference approaches.
\newblock {\em Phys. Rev. E}, 90:013307, 2014.

\bibitem{ZhouWangLi_PRE11}
S.~Zhou, Z.~Wang, and B.~Li.
\newblock Mean-field description of ionic size effects with non-uniform ionic
  sizes: {A} numerical approach.
\newblock {\em Phys. Rev. E}, 84:021901, 2011.

\end{thebibliography}
\end{document}